\newcommand{\beforepn}{-3\baselineskip}
\newcommand{\MELL}{\mathsf{MELL}}
\newcommand{\MALL}{\mathsf{MALL}}
\newcommand{\MLL}{\mathsf{MLL}}
\newcommand{\Mix}{\mathsf{Mix}}
\newcommand{\MLLJ}{\MLL_{\J}}
\newcommand{\PS}{\mathsf{PS}}
\newcommand{\PSJ}{\PS_{\J}}
\newcommand{\J}{\mathbf{J}}
\newcommand{\axj}{A_\mathbf{J}}
\newcommand{\erasure}[2]{\left.#1\right|_{#2}}
\newcommand{\erase}[2]{\mathsf{erase}_{#2}\left(#1\right)}
\newcommand{\erasurej}[1]{\erasure{#1}{\J}}
\newcommand{\erasej}[1]{\erase{#1}{\J}}
\renewcommand{\L}{\mathcal{L}}
\newcommand{\F}{\mathcal{F}}
\newcommand{\seq}{\mathsf{seq}}
\newcommand{\ctr}{\mathsf{ctr}}
\newcommand{\cut}{\mathsf{cut}}
\newcommand{\ax}{\mathsf{ax}}
\newcommand{\One}{\mathbf{1}}
\def\ctodo#1{}
\begin{document}
\begin{frontmatter}

  \title{The Exponential Logic of Sequentialization}
  \author{Aurore Alcolei\thanksref{a}\thanksref{aemail}}
  \author{Luc Pellissier\thanksref{b}\thanksref{bemail}}
  \author{Alexis Saurin\thanksref{c}\thanksref{cemail}}

  \address[a]{Université Paris Est Creteil, LACL, F-94010 Créteil, France}
  \thanks[aemail]{aurore.alcolei@ens-lyon.org}
  \address[b]{Université Paris Est Creteil, LACL, F-94010 Créteil, France}
  \thanks[bemail]{luc.pellissier@u-pec.fr}
  \address[c]{IRIF, CNRS, Université Paris Cité \& INRIA, Paris, France}
  \thanks[cemail]{alexis.saurin@irif.fr}

  \begin{abstract}

    Linear logic has provided new perspectives on proof-theory, denotational
    semantics and the study of programming languages. One of its main successes
    are proof-nets, canonical representations of proofs that lie at the
    intersection between logic and graph theory. In the case of the minimalist
    proof-system of multiplicative linear logic without units (MLL), these two
    aspects are completely fused: proof-nets for this system are graphs satisfying
    a correctness criterion that can be fully expressed in the language of graphs.

    For more expressive logical systems (containing logical constants, quantifiers
    and exponential modalities), this is not completely the case. The purely
    graphical approach of proof-nets deprives them of any sequential structure
    that is crucial to represent the order in which arguments are presented, which
    is necessary for these extensions. Rebuilding this order of presentation —
    sequentializing the graph — is thus a requirement for a graph to be logical.
    Presentations and study of the artifacts ensuring that sequentialization can
    be done, such as boxes or jumps, are an integral part of researches on linear
    logic.

    Jumps, extensively studied by Faggian and di Giamberardino, can express
    intermediate degrees of sequentialization between a sequent calculus proof and
    a fully desequentialized proof-net. We propose to analyze the logical strength
    of jumps by internalizing them in an extention of MLL where axioms on a specific
    formula, the jumping formula, introduce constrains on the possible sequentializations. The jumping
    formula needs to be treated non-linearly, which we do either axiomatically, or
    by embedding it in a very controlled fragment of multiplicative-exponential
    linear logic, uncovering the exponential logic of sequentialization.
  \end{abstract}
  \begin{keyword}
    jumps, linear logic, proof theory, proof nets, sequentialization
  \end{keyword}
\end{frontmatter}

\section{Introduction}
\label{sec:introduction}

Proof theory is concerned with proof systems, specifying how proofs are
structured, and in turn, which formul\ae{} are provable. Different proof systems
have different properties, such as
soundness (the fact that all provable
formul\ae{} are also valid —~for a given semantics of formul\ae{}),
completeness
(the fact that all valid formul\ae{} are also provable —~again, for a given semantics of formul\ae{}),
canonicity (two proofs differ if and only if they have different
interpretations —~for a given semantics of proofs), ...

Broadly speaking, Linear logic~\cite{Girard87} has two main kinds of proof systems:
\begin{description}
\item[Sequent calculus] Proofs are trees whose nodes are labelled by inference
  rules transforming sequents of formul\ae{}. This representation --~even though
  it is both historically decisive in the development of proof-theory since the
  work of Gentzen by emphasizing the role of the cut inference as well as for
  reductive logics (aka. proof-construction) by suggesting a natural notion of
  proof-goal~-- contains lot of unnecessary information (such as the specific
  order of application of some unrelated rules): it is far from being canonical.
  The loss of confluence in sequent calculus witnesses this loss of canonicity.
\item[Proof-nets] are graphs whose nodes relate formul\ae{} to one another. A
  proof-net contains much less information than a sequent calculus proof with
  respect to the order of application of inference rules. This allows them to be
  canonical, for some fragments of linear logic at least. In particular, they
  are canonical for multiplicative linear logic without units (simply referred
  to as $\MLL$ in the following). As a by-product, confluence is recovered in
  $\MLL$ proof-nets.
\end{description}

Apart from the contrast on canonicity issue, a major difference between sequent
proofs and proof-nets follows from the tree {\it vs.} graph structure and
concerns logical correctness of the proof-object: sequent proofs being
inductively defined proof objects, their logical correctness is naturally
defined in a \emph{local} way: soundness of the conclusion follows from the
soundness of each premise and one only considers derivation trees that is built
applying inferences in a correct way. On the other hand, proof-nets being
graphs, there is no such thing as a root, in general, and logical correctness
therefore becomes a \emph{global} property of the graph.  
One of this property is \emph{saturation}, the fact that the graph can
be saturated with more edges to produce a proof tree, called a \emph{sequentialization}
of the net. \ctodo{ utile ? cf comments sur l'introduction du terme sequentialization dans l'intro }
It would be ad hoc to
restrict solely to those graphs which are logically correct and one therefore
consider two levels of proof objects: \textbf{proof structures}, which are
graphs with some typing constraints on the edges and vertices; and
\textbf{proof-nets}, which are logically correct proof structures.

For instance, consider the following rules of linear logic sequent calculus:
\[
  \begin{prooftree}
    \infer0[$\One$]{\vdash \One}
  \end{prooftree}
  \qquad
  \begin{prooftree}
    \hypo{\vdash \Gamma}
    \infer1[$\bot$]{\vdash \Gamma,\bot}
  \end{prooftree}
\]
They introduce the so-called multiplicative constants \(\One\) and
\(\bot\). The way they are introduced differ on their context: while \(\One\)
can be deduced thanks to an axiom, in an empty content, \(\bot\) needs to be juxtaposed to an already
derived context. So, in a proof-net system, where operations are not on sequents
of formul\ae{} but on formul\ae{}, this contextuality is missing. Indeed, both
introductions are presented as a lone node with no input and an output:
\vspace{\beforepn}
\begin{proofnet}
  \pnformulae{
    \pnf[1]{$\One$}~~\pnf[b]{$\bot$}
  }
  \pnone{1}
  \pnbot{b}
\end{proofnet}
Context has thus to be recovered. One way to do so is via correctness
criterion, an extra \emph{property} verified by the graph,
ensuring that there exists a coherent way of assigning its context
to each formula. Another way is to add extra
\emph{structure} on top of the graph. The most common such structure is the
\emph{box}, introduced in Girard's seminal paper~\cite{Girard87}. Consider the sequent calculus rule introducing the exponential
modality \(\oc\):
\[
  \begin{prooftree}
    \hypo{\vdash \wn \Gamma, A}
    \infer1[$\oc$]{\vdash \wn \Gamma,\oc A}
  \end{prooftree}
\]
For a formula to receive a \(\oc\), all the other formul\ae{} in the context
have to be under the \(\wn\) modality. A way to translate that into proof-nets
is by requiring that a special region of the graph is enclosed so that every
formula going out of it has a \(\wn\), except one corresponding to the
\(\oc\)-cell. This is a direct translation of the sequent calculus (where rules
are applied to sequents and not formul\ae{}) into proof-nets, in a
graph-theoretically awkward fashion:
\begin{proofnet}
  \pnsomenet[R]{$\pi$}{2cm}{1cm}
  \pnoutfrom{R.-125}[gamma]{$\wn\Gamma$}
  \pnoutfrom{R.-45}[A]{$A$}
  \pnbox{R,A,gamma}
    \pnprom{A}{$\oc A$}
    \pnauxprom{gamma}{$\wn \Gamma$}
\end{proofnet}

Another more graph-theoretical way to introduce contexts are \emph{jumps}, a
recurrent feature in the literature on proof
nets~\cite{quantifier-In-linear-logic-II}.

They are purely geometrical (i.e. untyped) edges adding extra connection between
links of a proof structure. Jumps were first introduced by Girard
in~\cite{quantifier-In-linear-logic-II} to add more sequentiality in proof nets
with quantifiers, typically preventing sequentialization that would not match
the variable-witness dependencies between $\forall$ and $\exists$ quantifier.
Such jumps can also be found in unification nets, a more recent framework of 
proof nets with quantifiers that bears implicit witnesses~\cite{hughes18-unets}.
Jumps are also used as lighter feature than boxes to handle the sequentiality induced by sequent rules outside of $\MLL$, typically the
unit rules~\cite{lmcs:8871,Girard1996,DBLP:conf/tlca/Laurent99}, additive rules~\cite{Girard1996,DBLP:journals/tocl/HughesG05} and exponentials for
$\lambda$-nets~\cite{DBLP:conf/csl/AccattoliG09}.
\ctodo{(to check and complete,  add a drawing of a jump)}

\begin{figure}
  \vspace{\beforepn}
  \begin{proofnet}
    \pnformulae{
      \pnf[Ab]{$A^{\bot}$}~\pnf[Bb]{$B^{\bot}$}~\pnf[A]{$A$}~\pnf[B]{$B$}\\
      ~~~~~~\pnf[Cb]{$C^{\bot}$}~\pnf[C]{$C$}~~\pnf[Db]{$D^{\bot}$}~\pnf[D]{$D$}
    }
    \pnaxiom{Ab,A}[1][-0.5]
    \pnaxiom{Bb,B}[1][0.5]
    \pnaxiom{Cb,C}
    \pnaxiom{Db,D}
    \pntensor{Ab,Bb}[AboBb]{$A^{\bot}\otimes B^{\bot}$}
    \pnpar{A,B}[ApB]{$A \parr B$}
    \pntensor{ApB,Cb}[ApBoCb]{$(A \parr B)\otimes C^{\bot}$}
    \pntensor{C,Db}[CtDb]{$C \otimes D^{\bot}$}
    \draw[->,dashed,red] (ApBoCb.north east)
    .. controls +(85:3cm) and +(70:1.7cm) .. (C.north east);
 
  \end{proofnet}
  \caption{A graphical jump}
  \label{fig:graphical_jump}
\end{figure}
On another line of work, Faggian and her collaborators developped a 
setting in which jumps are taken seriously as edges in a more general 
graph than just the mere syntactic tree of a formula.
First in the context of
\emph{L-nets}~\cite{strategies-and-proof-nets,curien-faggian-l-nets,faggian-maurel-ludics-net},
a parallel syntax for Ludics designs, then in the context of \emph{J-proof nets}
for $\mathsf{HS}^+$ (a polarized and hypersequentialized sequent calculus) and
$\MLL$~\cite{DiGiamberardinoFaggian:conf,DiGiamberardinoFaggian:journal},
their setting allows in particular to see sequent calculus proofs as proof-nets
saturated with sequential edges, embedding both proof-nets and sequent calculus
proofs in a same universe of partially sequentialized proof-structures, each of
them living at one extremity on a de·sequentialization spectrum.

\ctodo{aurore: pouvez vous relire ce paragraphe ? FAIT (alex) j'ai corrigé quelques typos.}
The figure above depicts a proof net with a jump (the red dashed arrow) from (the formula occurrence) 
$(A\parr B) \otimes C^\perp$ to (the formula occurrence) $C$. 
As a consequence, when sequentializing such a net, the $\otimes$ rule
introducing $(A\parr B)\otimes C^\perp$ will always be placed above
the $\otimes$ rule using $C$ to introduce $C\otimes D^\perp$.
Jumps are thus a way to \emph{constrain} the set of sequentializations 
that are derivable from a proof net.

Proof structures with jumps offer a larger setting than usual proof structures,
it is thus natural to search how the behave under cut elimination and  how the 
usual correctness criteria on proof nets can be adapted to proof nets with jumps. 

%

\paragraph{Contributions and organization of the paper.}
In this paper, we show that graphical jumps as presented above can be
internalised in the logic under study at the only cost of adding 
a special atom $\J$ that enjoys, as well as its dual, the property of
a monoid and behaves as an exponential under cut elimination.
This atom can itself be encoded in a fragment of linear logic, thus giving a 
way to find correctness criteria for proof structure with jumps 
through the ones of usual proof structures.
\ctodo{Aurore: operade ou monoid ?}

To keep this paper light and readable, we will focus our internalisation of jumps to \(\MLL\). In Section~\ref{sec:mll}, we will recast the basic definitions
on $\MLL$ sequent proofs and proof-nets. In Section~\ref{sec:jumps-as-axioms} we start
introducing our encoding of jumps as axioms in a cut-free setting.
 In particular, we will define $\MLLJ$ a logic which is basically
 $\MLL$ plus a special atom $\J$ that enjoys left and right contractions rules.
We will prove that our encoding is correct and that our jumps 
correspond exactly to
the ones described in~\cite{DiGiamberardinoFaggian:journal}, in the sense
that their correctness criterion matches with the usual Danos-Regnier criterion
on our encoding. 
\ctodo{mieux ?}
In Section~\ref{sec:cuts} we will discuss the encoding of jumps in the
dynamic setting of proofs with cuts and cut elimination, enriching
$\MLLJ$ with the appropriate exponential dynamics for $\J$.
This will make explicit the dynamics of jumps sketched
in~\cite{DiGiamberardinoFaggian:journal} and treated more exhaustively in~\cite{giamberardino_2018}. 
Finally, we will conclude in Section~\ref{sec:implem} by showing that
the jump atom $\J$ can be implemented effectively in the
exponential fragment of $\MELL$ + $\Mix$.

\ctodo{Aurore: comment on se situe au niveau de la dynamique par rapport
à ce qui est écrit dans \cite{giamberardino_2018}? il faudrait peut-être en dire plus dans la section 4 et la conclusion}


\section{\texorpdfstring{\(\MLL\)}{MLL} proofs: sequent-calculus, proof-nets and sequentialization}
\label{sec:mll}

\(\MLL\) is a minimal proof-system. Its formul\ae{} are given by the following grammar (with $A$ ranging in a set of atomic formul\ae{}):
\[F = F\otimes F \mid F\parr F \mid A \mid A^\perp.\]

\emph{Linear negation} is inductively defined by
\[ (A^{\perp})^{\perp} =A \qquad
(F\otimes G)^\perp = G^\perp \parr F^\perp \qquad
  (F\parr G)^\perp = G^\perp \otimes F^\perp.
\]
\paragraph{Sequent calculus for $\MLL$}
\(\MLL\) proofs, in the sequent calculus, are given by the applications of the
rules:
\[
  \begin{prooftree}
    \infer0[Ax]{\vdash A, A^{\bot}}
  \end{prooftree}
  \qquad
  \begin{prooftree}
    \hypo{\vdash \Gamma, F, G}
    \infer1[\(\parr\)]{\vdash \Gamma, F\parr G}
  \end{prooftree}
  \qquad
  \begin{prooftree}
    \hypo{\vdash \Gamma, F}
    \hypo{\vdash \Delta, G}
    \infer2[\(\otimes\)]{\vdash \Gamma, \Delta, F\otimes G}
  \end{prooftree}
  \qquad
  \begin{prooftree}
    \hypo{\vdash \Gamma, F}
    \hypo{\vdash \Delta, F^\perp}
    \infer2[$\mathsf{cut}$]{\vdash \Gamma, \Delta}
  \end{prooftree}
\] 
Its dynamic of reduction, also called \emph{cut elimination} is recalled 
in figure~\ref{fig:redMLL}. The resulting calculus is (non deterministic and) not confluent. 

\begin{figure}
  $$\begin{array}{rcl}
  \begin{prooftree}
    \infer0[Ax]{\vdash A, A^{\bot}}
    \hypo{\pi}
    \infer1{\vdash A, \Delta}
    \infer2[$\cut$]{\vdash A, \Delta}
  \end{prooftree}
  & \qquad \to \qquad &
  \begin{prooftree}
    \hypo{\pi}
    \infer1{\vdash A, \Delta}
  \end{prooftree}
  \\\\
  \begin{prooftree}
    \hypo{\theta}
    \infer1{\vdash \Theta, F, G}
    \infer1[\(\parr\)]{\vdash \Theta, F\parr G}
    \hypo{\gamma}
    \infer1{\vdash \Gamma, F^\perp}
    \hypo{\delta}
    \infer1{\vdash \Delta, G^\perp}
    \infer2[\(\otimes\)]{\vdash \Gamma, \Delta, F^\perp \otimes G^\perp}
    \infer2[$\cut$]{\vdash \Theta, \Gamma, \Delta}
  \end{prooftree}
  &\qquad  \to \qquad &
  \begin{prooftree}
    \hypo{\theta}
    \infer1{\vdash \Theta, F, G}
    \hypo{\gamma}
    \infer1{\vdash \Gamma, F^\perp}
    \infer2[$\cut$]{\vdash \Theta,G,  \Gamma}
    \hypo{\delta}
    \infer1{\vdash \Delta, G^\perp}
    \infer2[$\cut$]{\vdash \Theta, \Gamma, \Delta}
  \end{prooftree}
  \\ & \text{or} &
  \begin{prooftree}
    \hypo{\theta}
    \infer1{\vdash \Theta, F, G}
    \hypo{\delta}
    \infer1{\vdash \Delta, G^\perp}
    \infer2[$\cut$]{\vdash \Theta,F,  \Delta}
    \hypo{\gamma}
    \infer1{\vdash \Gamma, F^\perp}
    \infer2[$\cut$]{\vdash \Theta, \Gamma, \Delta}
  \end{prooftree}
\end{array}
  $$
  \caption{Sequent calculus reductions}
  \label{fig:redMLL}
\end{figure}

\begin{remark}
  We have not made precise what sequents are, in the previous definition. There are many possible choices which have various impacts. We leave this as implicit and lightweight as possible in the remaining of the paper but need to discuss this  here. We crucially need a notion of sequents which allows for two properties: (i) to trace \emph{formula occurrences} along the proof, that is admit a well-defined notion of formula ancestor (ii) which validates the exchange rule (as an admissible rule at least).

  Among the standard options at hand, there are sequents as ordered lists of formulas, sequents as sets of locative formula occurrences or finally sequents as indexed sets of formulas. On the other hand, sequents as multisets of formulas are not an option as it does not allow for a well-defined notion of formula occurrence, nor a well-defined notion of desequentialization (see below).
  The first option requires to consider some form of an exchange rule: sequents are lists of formulas, written $\vdash F_1,\dots, F_n$, and the following exchange rule is necessary:
  \(  \begin{prooftree}
    \hypo{\vdash \Gamma, G, F, \Delta}
    \infer1[\(\mathsf{ex}\)]{\vdash \Gamma, F, G, \Delta}
  \end{prooftree}
\)
  .
In such a situation, one can treat, as often, the exchange rule implicitly by considering that the set of inferences is complemented with the derived rules obtained by pre- and post-composing every other rule with a series of exchanges.
(See Example~\ref{ex:ps-two-seq} for an illustration, if sequents are viewed as ordered lists there.) Each such derived rule is equipped with a notion of ancestor derived from the basic inference rules given above (as usually done in the proof theory literature~\cite{buss1998introduction}).

When sequents are treated in a locative way, their formulas are equipped with a unique address (pairwise incomparable) and the notion of sub-address provides an explicit notion of formula ancestor, the exchange rule in unnecessary in this case. (See~\cite{BaeldeDS16,Girard01} for instance.)
%
Last, with sequents as indexed sets of formulas, each inference rules is equipped with a relation between the indexing set of the conclusion of the rule and that of its premisses, constituting the ancestor relation. (Note that the above two presentations are concrete instances of this last presentation: with lists the indexing sets are the positions in the list, with locations, the indexing sets are finite sets of pairwise incomparable addresses.)

In the following, we shall forget about the above considerations and simply assume that we have a well-defined notion of formula occurrence and formula ancestor.
\end{remark}


\paragraph{Proof structures for $\MLL$}
Among sequent calculus proofs of a given sequent, some are essentially different, 
while some are essentially the same: a difference in the order of 
introduction of two connectives is typically a non-essential difference.
A canonical representation of proofs can be found when moving to the highly parallel setting of proof-nets, a graphical representation of proofs.
A \(\MLL\)  \emph{proof-structure} is a directed graph whose nodes (links) and edges abide to the
following typing constraints:\\

\vspace{\beforepn}
\begin{proofnet}
  \pnformulae{
    ~~~\pnf[F]{$F$}~\pnf[G]{$G$}
    ~~~\pnf[I]{$F$}~\pnf[J]{$G$}
    ~~~\pnf[H]{$F$}~\pnf[Hb]{$F^\bot$}\\
    \pnf[Ab]{$A^{\bot}$}~\pnf[A]{$A$}
  }
  \pnaxiom{Ab,A}
  \pntensor{F,G}{$F\otimes G$}
  \pnpar{I,J}{$F \parr G$}
  \pncut{H,Hb}
\end{proofnet}
Note that the type of the edges of a proof-structure can be recovered from the
type of the edges going out its axiom links and the type of its links, so that 
they could be made implicit although they are often kept for readability reasons.
Besides, we say that such a graph is a \emph{proof-structure for the sequent $\Gamma$} 
if we obtain $\Gamma$ by collecting all the formulas of its pending edges up to
reordering.

Reflecting the dynamics of sequent calculus proofs, proof structures 
also define a (non-deterministic) graph rewriting calculus; the corresponding reductions are recalled in Figure~\ref{fig:proof-structure-red}.
\begin{figure}[t]
  \vspace{\beforepn}
  \centering
  \pnet{
    \pnformulae{
      \pnf[A1]{$A$}~~\pnf[An]{$A^{\bot}$}~~\pnf[A2]{$A$}
    }
    \pnaxiom[ax]{A1,An}
    \pncut[cut]{An,A2}
  }
  \(\longrightarrow\)
  \pnet{
    \pnformulae{
      \pnf[A]{$\quad$}
    }
    \pninto{A.north}[Anorth]{$A$}
    \pnoutfrom{A.south}[Asouth]{$A$}
  }
  \qquad
  \pnet{
    \pnformulae{
      \pnf[A1]{$A$}~~\pnf[B1]{$B$}~~\pnf[B1n]{$B^{\bot}$}~~\pnf[A1n]{$A^{\bot}$}
    }
    \pntensor{A1,B1}[AoB]{$A\otimes B$}
    \pnpar{B1n,A1n}[ApB]{$B^{\bot} \parr A^{\bot}$}
    \pncut[partensor]{AoB,ApB}
  }
  \(\longrightarrow\)
  \pnet{
    \pnformulae{
      \pnf[A2]{$A$}~~\pnf[B2]{$B$}~~\pnf[B2n]{$B^{\bot}$}~~\pnf[A2n]{$A^{\bot}$}
    }
    \pncut[cut1]{A2,A2n}[4]
    \pncut[cut2]{B2,B2n}
  }
  \vspace{\beforepn}
  \caption{Proof-structure reductions}
  \label{fig:proof-structure-red}
\end{figure}
%
Cut elimination in proof-structures is non-deterministic but
\emph{confluent} and \emph{terminating}.
We write $\Downarrow R$ for the (unique) normal form of a proof-structure $R$;
$\Downarrow_\rho R$ if we want to make more precise the sequence of reduction 
steps leading to it\footnote{A sequence of reduction steps is simply given as a sequence of pairs of the activated cut-formulas -- remember that in $\MLL$ proof nets there is no commutation rule.}.

\paragraph{Sequent calculus proofs as (constrained) proof structures}

Let us consider an $\MLL$ proof $\pi$. A formula $F$ is a \emph{formula occurrence} in $\pi$ if it 
is the active conclusion of a rule in $\pi$ (the formula having its topmost connective introduced by the rule or the two formulas introduced by an axiom rule). 
For $F$ a formula occurrence in $\pi$, we define its \emph{life-time}
as sub-branch from its creation to its use.

For two rule occurrences $r$ and $r'$ in a proof $\pi$, we say that $r$ is
\emph{above} (or \emph{occurs before}) $r'$ if $r$ and $r'$ belong 
to the same branch in the proof, and $r'$ appears before $r$ 
starting from the root. 
If $r$ produces a formula $F$ this is equivalent to saying that there 
exists $F'$, an active formula of $r'$, such that $F$ is created before
$F'$ is used; in that case, we write $F \prec F'$.

Every sequent proof $\pi$ can be desequentialized into a proof structure
$R_\pi$, defined inductively. For a proof structure $R$, we note
$\seq(R)$ its set of sequentializations, that is, the set of sequent proofs
$\pi$ such that $R_\pi = R$.
Desequentialization is preserved by cut elimination and, conversely, if
$\pi\in \seq(R)$ and $R\to R'$, then there exists $\pi'\in \seq(R')$ such that
$\pi \to^+ \pi'$. Note, however, that in general proofs in $\seq(R')$ do not correspond
to the reduct of some proof in $\seq(R)$.

If $\pi$ is a sequentialization of $R$ then there is a one-to-one correspondence
between the  rules in $\pi$ and the links in $R$ (this correspondence is defined during
the inductive procedure of desequentialization).
By extension every edge in $R$ can be associated with a formula occurrence in $\pi$
throughout its life-time.
In the following we will often make use of this correspondence, keeping the conversion
from proof to net implicit.

Given a proof structure $R$, its set of sequentializations may be empty: 
in that case we say that $R$ is \emph{incorrect}.
On the contrary, if $\seq(R)$ is non empty, we say that $R$ is \emph{correct}
or that it is \emph{a proof net}, meaning that it actually corresponds to a 
canonical representation of proofs.

A proof net can have several sequentializations, for example:
\begin{example}
  \label{ex:ps-two-seq}
  Consider the (correct) proof-structure \(R\):
\vspace{\beforepn}
\begin{proofnet}
  \pnformulae{
    \pnf[Ab]{$A^{\bot}$}~\pnf[Bb]{$B^{\bot}$}~\pnf[A]{$A$}~\pnf[B]{$B$}\\
    ~~~~~~\pnf[Cb]{$C^{\bot}$}~\pnf[C]{$C$}~~\pnf[Db]{$D^{\bot}$}~\pnf[D]{$D$}
  }
  \pnaxiom{Ab,A}[1][-0.5]
  \pnaxiom{Bb,B}[1][0.5]
  \pnaxiom{Cb,C}
  \pnaxiom{Db,D}
  \pntensor{Ab,Bb}[AboBb]{$A^{\bot}\otimes B^{\bot}$}
  \pnpar{A,B}[ApB]{$A \parr B$}
  \pntensor{ApB,Cb}{$(A \parr B)\otimes C^{\bot}$}
  \pntensor{C,Db}{$C \otimes D^{\bot}$}
\end{proofnet}
It has two different sequentializations:
\[
  \begin{prooftree}
    \infer0{\vdash A^{\bot},A}
    \infer0{\vdash B^{\bot},B}
    \infer2[\(\otimes\)]{\vdash A^{\bot}\otimes B^{\bot},A,B}
    \infer1[\(\parr\)]{\vdash A^{\bot}\otimes B^{\bot},A\parr B}
    \infer0{\vdash C^{\bot},C}
    \infer2[\(\otimes\)]{\vdash A^{\bot}\otimes B^{\bot}, (A \parr B)\otimes C^{\bot}, C}
    \infer0{\vdash D^{\bot},D}
    \infer2[\(\otimes\)]{\vdash A^{\bot}\otimes B^{\bot}, (A \parr B)\otimes
      C^{\bot}, C \otimes D^{\bot},D}
  \end{prooftree}
  \quad
  \begin{prooftree}
    \infer0{\vdash A^{\bot},A}
    \infer0{\vdash B^{\bot},B}
    \infer2[\(\otimes\)]{\vdash A^{\bot}\otimes B^{\bot},A,B}
    \infer1[\(\parr\)]{\vdash A^{\bot}\otimes B^{\bot},A\parr B}
    \infer0{\vdash C^{\bot},C}
    \infer0{\vdash D^{\bot},D}
    \infer2[\(\otimes\)]{\vdash C^{\bot}, C \otimes D^{\bot},D}
    \infer2[\(\otimes\)]{\vdash A^{\bot}\otimes B^{\bot}, (A \parr B)\otimes
      C^{\bot}, C \otimes D^{\bot},D}
  \end{prooftree}
\]
On the first one, \(A\) is above \(D\), not in the second.
\end{example}

Graphical jumps as depicted in Figure~\ref{fig:graphical_jump} are used to 
restrict the possible
set of sequentializations of a net: if there is a jump from $F$ to $F'$ in $R$, then
we expect $F$ to appear above $F'$ in any sequentialization of $R$.
Writing jumps as a set of pairs of formula occurrences in $R$ and writing $\L$ for a set of jumps, we define
\[\seq_\L(R) := \{\pi \in \seq(R) \mid \forall (F,F')\in \L, F>_\pi F'\}\]
In the example above, the net with jump from $(A\parr B) \otimes C^\perp$ to $C$
has only one sequentialization left: $\seq_{((A\parr B) \otimes C^\perp,C)}(R)$
corresponds to the left sequentialization of Example~\ref{ex:ps-two-seq}.

The aim of this paper is to provide a logical encoding of jumps in proof nets
that validates the definition of $\seq_\L(R)$.
More precisely, given a proof net $R$ and a set of jumps $\L$, we want to define of proof net $\J_\L(R)$ such that $\seq(R) = \seq(\J_\L(R))$.

\section{Jumps as axioms}
\label{sec:jumps-as-axioms}

In this section, we show how the graphical jumps presented above can be encoded in cut-free \(\MLL\) proof structures using 
the axiom link of fresh atomic formulas $\J$ to create synchronisation points.
We first introduce the idea for a single jump, then generalize it to multiple jumps.

\subsection{Case of a single jump}
\begin{definition}
  \label{def:gadget1}
  Let \(F\) and \(F'\) be two formul\ae{}. A \emph{jump} from \(F\) to \(F'\) is
  the gadget:
\vspace{\beforepn}
  \begin{proofnet}
    \pnformulae{
      \pnf[F]{$F$}~\pnf[j]{$\J$}~~~~\pnf[jb]{$\J^{\bot}$}~\pnf[F']{$F'$}\\
    }
    \pnaxiom{j,jb}
    \pntensor{F,j}{$F \otimes \J$}
    \pnpar{F',jb}{$\J^{\bot}\parr F'$}
  \end{proofnet}
  
  Let \(R\) be a proof-structure, and \(F\), \(F'\) be two formula occurrences 
  in \(R\).

  We define \(\J_{F\to F'}(R)\), the \emph{proof structure with a jump} from \(F\) to \(F'\), by
  grafting in $R$ the jump from \(F\) to \(F'\) where \(F\) and \(F'\) are, and making
  all the types coherent by propagating downward the formula labels.
\end{definition}

Let us illustrate the previous definition. Consider the (correct) proof-structure \(R\) in Example~\ref{ex:ps-two-seq}, we
can add a jump from \(F = (A \parr B)\otimes C^{\bot}\) to \(C\), resulting in
the proof-structure \(\J_{F\to C}(R)\), where the conclusion formulas have been updated to take into account the impact of adding the jump: \vspace{\beforepn}
  \begin{proofnet}
  \pnformulae{
    \pnf[Ab]{$A^{\bot}$}~\pnf[Bb]{$B^{\bot}$}~\pnf[A]{$A$}~\pnf[B]{$B$}\\
    ~~~~~~\pnf[Cb]{$C^{\bot}$}~~~\pnf[C]{$C$}~~\pnf[Db]{$D^{\bot}$}~\pnf[D]{$D$}\\
    ~~~~~~~\pnf[J]{$\J$}~\pnf[Jb]{$\J^{\bot}$}
  }
  \pnaxiom{Ab,A}[1][-0.5]
  \pnaxiom{Bb,B}[1][0.5]
  \pnaxiom{Cb,C}
  \pnaxiom{Db,D}
  \pnaxiom{Jb,J}
  \pntensor{Ab,Bb}[AboBb]{$A^{\bot}\otimes B^{\bot}$}
  \pnpar{A,B}[ApB]{$A \parr B$}
  \pntensor{ApB,Cb}[ApBoCb]{$(A \parr B)\otimes C^{\bot}$}
  \pntensor{Jb,C}[JboC]{$\J^{\bot} \otimes C$}
  \pntensor{JboC,Db}{$(\J^\bot \otimes C) \otimes D^{\bot}$}
  \pnpar{ApBoCb,J}{$((A \parr B)\otimes C^{\bot})\parr \J$}
  \end{proofnet}

This proof-structure is much more rigid than \(R\). It also has only two
sequentializations:
\[
\scalebox{1}{$
  \begin{prooftree}
    \hypo{\vdash A^{\bot},A}
    \hypo{\vdash B^{\bot},B}
    \infer2[\(\otimes\)]{\vdash A^{\bot}\otimes B^{\bot},A,B}
    \infer1[\(\parr\)]{\vdash A^{\bot}\otimes B^{\bot},A\parr B}
    \hypo{\vdash C^{\bot},C}
    \infer2[\(\otimes\)]{\vdash A^{\bot}\otimes B^{\bot}, (A \parr B)\otimes C^{\bot}, C}
    \hypo{\vdash \J^{\bot},\J}
    \infer2[\(\otimes\)]{\vdash A^{\bot}\otimes B^{\bot}, (A \parr B)\otimes
      C^{\bot},\J, \J^{\bot} \otimes C}
    \infer1[\textcolor{green}{\(\parr\)}]{\vdash A^{\bot}\otimes B^{\bot}, (A \parr B)\otimes
      C^{\bot}\parr \J, \J^{\bot} \otimes C}
    \hypo{\vdash D^{\bot},D}
    \infer2[\(\otimes\)]{\vdash A^{\bot}\otimes B^{\bot}, (A \parr B)\otimes
      C^{\bot}\parr \J, (\J^{\bot} \otimes C)\otimes D^{\bot},D}
  \end{prooftree}
  $}
\]
and the one where the \(\parr\) rule shown in green is commuted below the
\(\otimes\) rule. They differ in a much lighter way than the two
sequentializations of \(R\): indeed, forgetting every occurrence of \(\J\) in
both of these two sequent calculus proofs yields only one of the two
sequentializations of the original proof-net. The presence of the jump from
\(F = (A \parr B)\otimes C^{\bot}\) to \(C\) forces \(F\) to be above \(C\) in
every sequentialization of \(\J_{F\to C}(R)\). Let us turn this example into a
general theorem.

\paragraph{$\MLLJ$}
\label{para:mllj}

From now on, we consider $\J$ as a special atom that does not appear in regular
$\MLL$ formulae and proofs.
We write $\MLLJ$ (respectively $\PSJ$) for the cut-free $\MLL$ proof system (resp.
proof-structures) that does allow for the presence of \(\J\) but with strong
restrictions on its use described below.

We will call \(J\)-formul\ae{} (respectively \(J\)-environments), denoted by
$J, K, \ldots$ (respectively $\Gamma_\J,\Delta_\J,\Theta_\J, \cdots$), formul\ae{} in the
grammar $J = \J\mid \J^\perp$ (respectively environments consisting of
\(J\)-formul\ae{}). $\MLLJ$ has the same rules as $\MLL$ except for
$J$-formulae that are kept separated from other formulae until being used:
\[
  \begin{prooftree}
    \hypo{\vdash \Gamma, F, \Theta_\J}
    \infer0[ax$_\J$]{\vdash \J, \J^\perp}
    \infer2[\(\otimes_\J, F\notin J\)]
    {\vdash \Gamma, F\otimes \J, \J^\perp, \Theta_\J}
  \end{prooftree}
  \qquad  \qquad
  \begin{prooftree}
    \hypo{\vdash \Gamma, F, \J^\perp,\Theta_\J}
    \infer1[\(\parr_{\J^\perp}, F\notin J\)]{\vdash \Gamma, F\parr  \J^\perp,\Theta_\J}
  \end{prooftree}
\]

Proof structures and proof nets for $\MLLJ$ are essentially the same as in 
  $\MLL$ except for the typing restriction put on links $\otimes$ and $\parr$ 
  when dealing with $J$-formul\ae{}.
  This motivates to define   $\MLL_\J$-formulas by the following grammar:
%
  $$F_J = A \mid F_J \parr F_J\mid F_J \otimes F_J \mid F_J\parr \J^\perp \mid F_J\otimes \J.$$
The following proposition is trivial by induction on the structure of $\MLL_\J$ proofs.
\begin{proposition}
  If $\vdash \Gamma$ is derivable in $\MLL_\J$, then $\Gamma$ can be partitioned in $\Delta,\Theta_\J$  such that every formula in $\Delta$ is a $\MLL_J$-formula
  and every
formula in $\Theta_\J$ is $\J^\perp$.
\end{proposition}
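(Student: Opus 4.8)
The plan is to prove this by a straightforward induction on the structure of an $\MLL_\J$ derivation $\pi$ of $\vdash \Gamma$, checking that the property --- namely that $\Gamma$ splits as $\Delta, \Theta_\J$ with every formula of $\Delta$ an $\MLL_J$-formula and every formula of $\Theta_\J$ equal to $\J^\perp$ --- is preserved by each of the rules of $\MLLJ$. First I would set up the statement as an invariant: call a sequent \emph{well-partitioned} if it admits such a decomposition. Note that since $\J$ is an atom distinct from all $\MLL$ atoms and the grammar $F_J$ never produces the bare formula $\J$ (only $\J^\perp$ appears, and only as the right immediate subformula of a $\parr$, or $\J$ as the right subformula of a $\otimes$), the decomposition is in fact forced: $\Theta_\J$ consists exactly of the conclusion formulas equal to $\J^\perp$, and $\Delta$ is everything else; so there is nothing to choose and the induction is about showing $\Delta$ consists of $\MLL_J$-formulas.

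Then I would run through the cases. For the axiom rule $\mathrm{ax}$ producing $\vdash A, A^\perp$: both $A$ and $A^\perp$ are atomic $\MLL$ formulas, hence $\MLL_J$-formulas by the first clause of the grammar, and $\Theta_\J$ is empty. For $\mathrm{ax}_\J$ producing $\vdash \J,\J^\perp$: this rule never occurs in isolation in $\MLLJ$ --- it only appears as the right premise of $\otimes_\J$ --- but even taken on its own one puts $\J^\perp$ in $\Theta_\J$ and... here $\J$ itself is not an $\MLL_J$-formula, so one must be slightly careful: the right way is to observe that in $\MLLJ$ the rule $\mathrm{ax}_\J$ is not a standalone inference but is bundled into $\otimes_\J$, so the induction is really over derivations whose every leaf is either $\mathrm{ax}$ or an $\otimes_\J$ applied with an $\mathrm{ax}_\J$ premise; alternatively, one states the proposition only for end-sequents and handles $\otimes_\J$ as a single step. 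For the $\parr$ rule from $\vdash \Gamma, F, G$ to $\vdash \Gamma, F \parr G$: by induction $\Gamma, F, G$ is well-partitioned; since $F \parr G$ is not a $\J$-formula, the side condition of $\MLL$'s $\parr$ guarantees $F, G \notin \{\J,\J^\perp\}$ in the relevant sense, so $F$ and $G$ lie in the $\MLL_J$ part, hence $F \parr G$ is an $\MLL_J$-formula by the $F_J \parr F_J$ clause, and $\Theta_\J$ is unchanged. The $\otimes$ rule is handled symmetrically using the $F_J \otimes F_J$ clause. For $\parr_{\J^\perp}$ from $\vdash \Gamma, F, \J^\perp, \Theta_\J$ to $\vdash \Gamma, F \parr \J^\perp, \Theta_\J$: by induction $F$ is an $\MLL_J$-formula (it is in the $\Delta$ part, since $F \notin J$ by the side condition), the displayed $\J^\perp$ moves out of $\Theta_\J$, and $F \parr \J^\perp$ is an $\MLL_J$-formula by the $F_J \parr \J^\perp$ clause; the remaining $\Theta_\J$ stays as the $\J^\perp$-part. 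For $\otimes_\J$ from $\vdash \Gamma, F, \Theta_\J$ (and the $\mathrm{ax}_\J$ premise $\vdash \J,\J^\perp$) to $\vdash \Gamma, F \otimes \J, \J^\perp, \Theta_\J$: by induction $F$ is an $\MLL_J$-formula, so $F \otimes \J$ is an $\MLL_J$-formula by the $F_J \otimes \J$ clause, the newly introduced $\J^\perp$ joins the $\J^\perp$-part, and $\Gamma, \Theta_\J$ are unchanged.

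The only genuinely delicate point --- and the one I would flag as the main obstacle, modest though it is --- is the bookkeeping around $\mathrm{ax}_\J$ and the $\J$ occurrence it produces: since $\J$ on its own is \emph{not} an $\MLL_J$-formula, the invariant is only true because $\MLLJ$ forbids $\J$ from floating freely --- it must be immediately consumed by the $\otimes_\J$ that created its axiom. So the clean way to organize the proof is to take the induction over $\MLLJ$ derivations where $\otimes_\J$ (with its built-in $\mathrm{ax}_\J$ leaf) counts as one rule, matching the presentation of the rules given just above the proposition; then every sequent appearing in the derivation --- not just the conclusion --- is well-partitioned, and in particular no sequent ever contains a bare $\J$. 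With that convention the induction goes through uniformly and the proposition follows. I would close by remarking that this is exactly the sense in which "$J$-formulae are kept separated from other formulae until being used", which is the design intent of $\MLLJ$.
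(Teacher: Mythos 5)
Your proof is correct and follows exactly the route the paper takes: the paper simply asserts the proposition is ``trivial by induction on the structure of $\MLL_\J$ proofs,'' and you carry out that induction case by case, correctly handling the one bookkeeping subtlety that $\mathrm{ax}_\J$ is bundled into $\otimes_\J$ so no bare $\J$ ever survives in a conclusion sequent.
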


In particular, when removing $\J$, $\MLLJ$ is equivalent to $\MLL$. More
precisely:
\ctodo{Alexis: What is an $\MLL_\J$ formula? I think it is the $F_\J$ grammar I defined in the proposition, but would like to double-check... }
\ctodo{Aurore: $\MLLJ$ c'est à la fois les $F_J$ et les $J$ formulae}
\ctodo{Alexis: mais si tu prends aussi les J, la définition d'effacement ne marche pas, je crois qu'il faut vraiment prendre uniquement les $F_\J$.}
\begin{definition}
  \label{def:erasure}
  Let \(F\) be a $\MLLJ$-formula, the \emph{erasure} \(\erasure{F}{\J}\) of
  \(\J\) in \(F\) is inductively defined by:
  \begin{itemize}
  \item if \(F\) is an atom:  \(\erasure{F}{\J} = F\);
  \item if $F = F_0 \boxempty F_1$ with $\boxempty \in \{\parr, \otimes\}$:
    $\erasure{F_0}{\J}$ if $F_{1} \in \{\J, \J^\perp\}$, $\erasure{F_0}{\J} \boxempty \erasure{F_1}{\J}$ otherwise;
  \end{itemize}
  Let \(\pi\) be an \(\eta\)-expanded sequent calculus proof in $\MLLJ$ (that is, all axiom rules are applied to atomic formulas only), the
  \emph{erasure} \(\erase{\pi}{\J}\) is inductively defined by:
  \begin{itemize}
  \item if \(\pi\) is an axiom \(
    \begin{prooftree}
      \hypo{\vdash B^{\bot},B}
    \end{prooftree}
    \)
    then \(\erase{\pi}{\J} = \pi\);
  \item if \(\pi\) is an application of a \(\parr\) on $A\parr B$, with subproof $\pi'$:
    \begin{itemize}
      \item if \(B\neq\J^\perp\), \({\erase{\pi}{\J}}\) is defined by applying a $\parr$ rule on $\erasure{A}{\J}$ and $\erasure{B}{\J}$ to \({\erase{\pi'}{\J}}\);
    \item if \(\pi\) is an application of a \(\parr_{\J^\perp}\) after $\pi'$
    then \(\erase{\pi}{\J}\) is defined as: \({\erase{\pi'}{\J}}\);
    \end{itemize}
  \item if \(\pi\) is an application of a \(\otimes\) on $A\otimes B$, combining subproofs $\pi_A$ and $\pi_B$:
    \begin{itemize}
    \item if \(B \neq \J\), then  \({\erase{\pi}{\J}}\) is defined by combining
       \({\erase{\pi_A}{\J}}\) and  \({\erase{\pi_B}{\J}}\) with a $\otimes$ rule on $\erasure A \J$ and $\erasure B \J$;
    \item if \(\pi\) is an application of a \(\otimes_\J\) combining 
    $\pi'$ and $\ax_\J$ 
    then \(\erase{\pi}{\J}\) is defined as \(\erase{\pi'}{\J}\).
\end{itemize}
  \end{itemize}
    Let $R$ be an $\MLLJ$-proof-structure the \emph{erasure}, \(\erase{R}{\J}\), is inductively defined in the same way 
  as \(\erase{\pi}{\J}\), that is by erasing every occurrence of $\J,\J^\perp$ and the $\parr$ and $\otimes$ link immediately connected to those atoms as well as the axioms links between $\J$ and $\J^\perp$.
\end{definition}

  Note that if $\vdash \Gamma$ is the conclusion of $\pi$ (resp. $R$), then $\vdash \erasure{\Gamma}{\J}$
  is the conclusion of $\erase{\pi}{\J}$ (resp. $\erasej{R}$). Erasures cannot be 
  empty by construction of $\MLL_J$.
  Moreover, if $\pi \in \seq(R)$ then $\erasej{\pi} \in \seq(\erasej{R})$.
%
%
%

  $\MLL$ is included in $\MLLJ$.
  Moreover:
  
 \begin{proposition}
If $\vdash \Gamma,\Theta_J$ is provable in $\MLLJ$ 
   with $\vdash \Gamma \neq\emptyset$, 
   then $\erasurej{\Gamma}$ is provable in $\MLL$.
   In fact, for $\pi \vdash \Gamma,\Theta_\J$, $\erasej{\pi}$
   is a $\MLL$ proof and there is an embedding 
   $\erasej{\pi} \hookrightarrow \pi$ 
   of the rules of $\erasej{\pi}$ onto the rules of $\pi$, preserving their kind and
   the order between them.
 \end{proposition}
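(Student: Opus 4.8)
The plan is to prove the proposition by induction on the structure of the $\MLLJ$ proof $\pi \vdash \Gamma, \Theta_\J$, tracking simultaneously that $\erasej{\pi}$ is a well-formed $\MLL$ proof of $\erasurej{\Gamma}$ and that the claimed rule-embedding exists. The crucial invariant to maintain along the induction is the statement of the previous proposition: at every stage the conclusion sequent splits as $\Delta, \Theta_\J$ with $\Delta$ consisting of $\MLL_\J$-formulas and $\Theta_\J$ consisting of copies of $\J^\perp$; moreover one should strengthen the induction hypothesis to also record that $\Delta$ is non-empty whenever the corresponding $\MLL$ fragment is non-trivial — this is exactly the "erasures cannot be empty by construction of $\MLL_J$" remark, which needs to be threaded through the argument so that the $\otimes$ case below does not produce a rule with an empty premise.

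First I would set up the case analysis on the last rule of $\pi$. For the axiom rule ax on $B, B^\perp$ with $B$ an $\MLL$ atom (we are in the $\eta$-expanded setting, so axioms are atomic; note $B \neq \J$ since $\J$-axioms are the separate $\ax_\J$ rule): $\erasej{\pi} = \pi$ and the embedding is the identity on this single rule. For the ax$_\J$ rule, by the shape of the $\otimes_\J$ rule an $\ax_\J$ never appears in isolation as the conclusion of $\pi$ when $\Gamma \neq \emptyset$; more carefully, $\ax_\J$ introduces $\vdash \J, \J^\perp$, whose erasure is empty, so this case is excluded by the hypothesis $\Gamma \neq \emptyset$ at the root and handled as a degenerate subcase only as the left-or-right argument of a $\otimes_\J$. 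For a $\parr$ on a genuine $\MLL_\J$-formula $A \parr B$ ($B \neq \J^\perp$): apply the IH to the subproof $\pi'$, get $\erasej{\pi'}$ proving $\erasurej{\Gamma', A, B}$, and since $A, B$ are themselves $\MLL_\J$-formulas whose erasures are non-empty $\MLL$ formulas, append a $\parr$ on $\erasurej{A}, \erasurej{B}$; extend the embedding by sending this new $\parr$ to the $\parr$ of $\pi$. For $\parr_{\J^\perp}$ on $F \parr \J^\perp$: here $\erasej{\pi} = \erasej{\pi'}$ and the embedding is simply the IH embedding — one rule of $\pi$ (the $\parr_{\J^\perp}$) is in the complement of the image, which is fine since we only require an embedding, not a bijection. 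The two $\otimes$ cases are symmetric: for a genuine $\otimes$ on $A \otimes B$ ($B \neq \J$), combine the two erased subproofs with a $\otimes$ on $\erasurej{A}, \erasurej{B}$ (both non-empty, so the new rule is legitimate) and extend the embedding; for $\otimes_\J$ combining $\pi'$ with $\ax_\J$, set $\erasej{\pi} = \erasej{\pi'}$ and keep the IH embedding — here the subproof being erased, $\ax_\J$, contributes nothing and the $F \notin J$ side-condition on $\otimes_\J$ guarantees that the surviving $\erasej{\pi'}$ has non-empty conclusion.

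To finish, I would check that the order-preservation clause of the embedding is automatically maintained: in each inductive step the new $\MLL$ rule is placed strictly below all rules coming from the subproof(s), which is exactly the position of the corresponding $\pi$-rule relative to its subproofs, so "above" in $\erasej{\pi}$ is a sub-relation of "above" in $\pi$ transported along the embedding — the only thing to verify is that erasing a rule (the $\parr_{\J^\perp}$ or the $\otimes_\J$/$\ax_\J$ pair) never inverts the order of two surviving rules, which is clear since deleting a node from a proof tree and contracting the edge preserves the ancestor order on the remaining nodes. The main obstacle I anticipate is purely bookkeeping rather than conceptual: one must be careful that in the $\otimes$ case neither erased premise becomes empty (otherwise the $\MLL$ $\otimes$ rule would be ill-formed), which is why the induction hypothesis must carry the non-emptiness of the $\MLL_\J$-part $\Delta$ and why the $F \notin J$ side-conditions in the $\otimes_\J$ and $\parr_{\J^\perp}$ rules are essential — these side-conditions prevent a $\J$-formula from ever being the "main" formula on the $\MLL$ side, keeping $\erasurej{F}$ defined and non-empty for every $\MLL_\J$-formula $F$ occurring in $\pi$.
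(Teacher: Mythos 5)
Your proposal is correct and follows essentially the same route as the paper: a structural induction on $\pi$ in which every case is routine except the tensor case, where one must observe (as you do) that the $F\notin J$ side-conditions guarantee the erased active formul\ae{} and premises stay non-empty so the $\MLL$ $\otimes$ rule remains well-formed. Your additional bookkeeping on the embedding and order preservation just spells out what the paper leaves as "straightforward".
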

 \begin{proof}
   By structural induction on $\pi$.
   Every case is straightforward, except maybe when $\otimes$ is the
   last rule of $\pi$: 
   if $F$ and $G$ are the active formulae, one shall remark 
   that by construction $\erasurej{F}$ or $\erasurej{G} \neq \emptyset$.  
 \end{proof}

\begin{theorem}
\label{thm:jump-as-axiom}
  Let \(R\) be a proof-structure, and \(F\) and \(F'\) two formula occurrences, then
  \(\erasej{\seq\ (\J_{F\to F'}(R))} =  \seq_{(F, F')}\ (R)\).
  In other words, the following diagram commutes:
  \begin{center}
    \begin{tikzcd}
      \PSJ \ar[r,"\seq"] & \mathcal{P}(\MLLJ) \ar[d,"\mathcal{P}(\erasej{\_})"]\\
      \PS  \ar[u,"\J_{F\to F'}(\_)"] \ar[r,"\seq_{(F,F')}"]& \mathcal{P}(\MLL)
    \end{tikzcd}
  \end{center}
\end{theorem}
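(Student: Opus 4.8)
I will prove the theorem by establishing the two inclusions of the set equality
\(\erasej{\seq(\J_{F\to F'}(R))} = \seq_{(F,F')}(R)\), exploiting the
correspondence between links of a proof-net and rules of any sequentialization,
together with the properties of erasure already recorded (erasure sends
\(\MLLJ\)-proofs to \(\MLL\)-proofs, commutes with desequentialization, and
embeds the rules of \(\erasej{\pi}\) into those of \(\pi\) preserving kind and
order). First I would set up notation: write \(R' = \J_{F\to F'}(R)\), let
\(t = F\otimes\J\) be the new tensor link, \(a\) the new axiom link on
\(\J,\J^\perp\), and \(p = \J^\perp\parr F'\) the new par link, and observe that
\(\erasej{R'} = R\) by the very definition of the gadget and of erasure (the
three new links and the two \(\J\)-edges are exactly what erasure removes, and
the downward relabelling introduced by grafting is undone). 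This already gives
that \(\erasej{\_}\) maps \(\seq(R')\) into \(\seq(R)\), so only the jump
constraint and surjectivity remain to be checked.

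\textbf{Soundness direction (\(\subseteq\)).}
Take \(\sigma \in \seq(R')\) and set \(\pi = \erasej{\sigma}\); by the erasure
proposition \(\pi\) is an \(\MLL\)-proof and, since erasure commutes with
desequentialization, \(\pi \in \seq(R)\). It remains to show \(F >_\pi F'\),
i.e. that the rule creating (the residual of) \(F\) lies strictly above the rule
using \(F'\) in \(\pi\). In \(\sigma\), the link \(t = F\otimes\J\) must be
introduced by a \(\otimes_\J\) rule; its left premise is the sub-sequent
containing \(F\), so the rule creating \(F\) in \(\sigma\) is above \(t\).
Dually, the par link \(p=\J^\perp\parr F'\) is introduced by a
\(\parr_{\J^\perp}\) rule, whose premise contains \(F'\); crucially the
\(\J^\perp\)-edge entering \(p\) comes from the axiom \(a\), whose other
conclusion \(\J\) is consumed by \(t\). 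Hence in \(\sigma\) the axiom \(a\) is
above both \(t\) and \(p\), and — tracing the \(\J\)-thread — the rule \(t\) must
occur above the rule \(p\) (the \(\J\) conclusion of \(t\) is an ancestor of the
\(\parr\)-formula only through the tensor), so \(F\)'s creation is above \(t\),
\(t\) is above \(p\), and \(p\) is above \(F'\)'s use; composing,
\(F >_\sigma F'\). The rule embedding \(\erasej{\sigma}\hookrightarrow\sigma\)
preserves order, so \(F >_\pi F'\) as well, giving
\(\pi \in \seq_{(F,F')}(R)\).

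\textbf{Completeness direction (\(\supseteq\)).}
Conversely, let \(\pi \in \seq_{(F,F')}(R)\), so \(F >_\pi F'\). I must produce
\(\sigma \in \seq(R')\) with \(\erasej{\sigma} = \pi\). The idea is to insert
into \(\pi\), at the right places, the rules \(\ax_\J\), \(\otimes_\J\) (for
\(t\)) and \(\parr_{\J^\perp}\) (for \(p\)): read \(\pi\) top-down; when the rule
creating \(F\) is reached, immediately below it replace the corresponding
\(\otimes\) — which in \(\pi\) tensors \(F\) with whatever \(R\) tensors it with
— by the pattern that first applies \(\ax_\J\), then \(\otimes_\J\) introducing
\(F\otimes\J\) while carrying \(\J^\perp\) along in the side \(J\)-environment;
carry \(\J^\perp\) down through every subsequent rule (always legal, since the
side conditions \(F\notin J\) only forbid tensoring/par-ing two \(\J\)-formulae,
which never happens here) until the rule using \(F'\) is reached, and just above
it insert a \(\parr_{\J^\perp}\) forming \(\J^\perp\parr F'\). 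Because
\(F >_\pi F'\), the creation point of \(F\) is strictly above the use point of
\(F'\) in the same branch, so there is room to route the \(\J^\perp\)-thread from
\(t\) down to \(p\); and the \(\J\)-axiom can be placed so that its two
conclusions feed \(t\) and (via the thread) \(p\). One checks that the resulting
derivation \(\sigma\) is a well-formed \(\MLLJ\)-proof, that its
desequentialization is exactly \(R'\) (the new links are \(t,a,p\), the rest
unchanged), and that \(\erasej{\sigma} = \pi\) by construction. Hence
\(\pi \in \erasej{\seq(R')}\).

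\textbf{Main obstacle.}
The delicate point is the completeness direction: showing that the
\(\J^\perp\)-thread can always be threaded from the \(\otimes_\J\) down to the
\(\parr_{\J^\perp}\) through an arbitrary sequentialization \(\pi\), and that
the insertion does not clash with \(\otimes\) rules that split the context in
between (the \(\J^\perp\) must consistently follow the branch containing \(F'\),
never the other branch of an intervening \(\otimes\)). This is exactly where the
hypothesis \(F >_\pi F'\) is used: it guarantees \(F\) and \(F'\) lie on a common
branch with \(F\)'s creation above \(F'\)'s use, so the thread has a canonical
route, and at each binary \(\otimes\) on that route \(F'\) — hence \(\J^\perp\) —
goes into the designated premise. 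Making this routing precise (ideally by
induction on the length of the branch segment between the two rules, or
equivalently on the structure of \(\pi\) with the position of \(F\) and \(F'\)
tracked) is the technical heart of the argument; once it is done, that the
decorated proof sequentializes \(R'\) and erases to \(\pi\) is routine.
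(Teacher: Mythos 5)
Your proof is correct and follows essentially the same route as the paper's: for $\subseteq$, trace the $\J/\J^\perp$ axiom through the new $\otimes_\J$ and $\parr_{\J^\perp}$ rules to force the creation of $F$ above the use of $F'$ (transferring the order through the erasure embedding); for $\supseteq$, insert the three gadget rules into $\pi$ just below the rule creating $F$ and just above the rule using $F'$, threading $\J^\perp$ down the common branch guaranteed by $F>_\pi F'$. The paper treats the context-propagation step you flag as the "main obstacle" as routine bookkeeping ("propagating the changes in the resulting sequent accordingly"), so your extra care there only makes explicit what the paper leaves implicit.
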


In particular, correctness of the proof-net enriched with a jump is euivalent to existence of particular sequentializations:

\begin{corollary}[correctness]
  Let \(R\) be a proof-structure, and \(F\) and \(F'\) two formula occurrences, then
  \(\J_{F \to F'}(R)\) is correct if and only if there exists
  a sequentialization of \(R\) such that \(F\) is above \(F'\).
\end{corollary}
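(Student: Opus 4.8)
The plan is to prove Theorem~\ref{thm:jump-as-axiom}, the commutation of the square, by a double inclusion between $\erasej{\seq(\J_{F\to F'}(R))}$ and $\seq_{(F,F')}(R)$, using the inductive correspondence between rules of a proof and links of a net together with the erasure machinery of Definition~\ref{def:erasure}.

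\medskip

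\noindent\textbf{Inclusion $\subseteq$.} Let $\sigma \in \seq(\J_{F\to F'}(R))$. By the remark following Definition~\ref{def:erasure}, $\erasej{\sigma} \in \seq(\erasej{\J_{F\to F'}(R)})$, and since grafting the gadget and then erasing $\J$, $\J^\perp$ and the adjacent $\otimes/\parr/\ax$ links gives back exactly $R$ (this is immediate from Definition~\ref{def:gadget1}: the gadget consists of precisely an $\ax_\J$, one $\otimes$ with a $\J$ premise and one $\parr$ with a $\J^\perp$ premise, which are exactly the links erasure removes, and the downward relabelling is undone), we get $\erasej{\sigma} \in \seq(R)$. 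It remains to check the jump constraint $F >_{\erasej{\sigma}} F'$. In $\J_{F\to F'}(R)$ the occurrence $F$ is a premise of the $\otimes_\J$ link, whose conclusion $F\otimes\J$ must, in any sequentialization $\sigma$, be introduced by a $\otimes$ rule whose left premise derives (a sequent containing) $F$ and whose right premise is the axiom $\vdash \J,\J^\perp$; moreover the $\J^\perp$ so produced must later be consumed by the $\parr_{\J^\perp}$ link whose other premise lies above $F'$. Tracing the life-times: the $\J^\perp$ edge links the $\otimes_\J$-introduction to the $\parr_{\J^\perp}$-introduction in the same branch, forcing the $\otimes_\J$ rule strictly above the $\parr_{\J^\perp}$ rule, which in turn sits above (or is) the rule producing $F'$; hence $F$ is created before $F'$ is used, i.e. $F \prec_\sigma F'$, and this ordering survives erasure since erasure preserves the order of the non-$\J$ rules (Proposition on $\erasej{\pi}\hookrightarrow\pi$). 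So $\erasej{\sigma}\in\seq_{(F,F')}(R)$.

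\medskip

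\noindent\textbf{Inclusion $\supseteq$.} Conversely, let $\pi\in\seq_{(F,F')}(R)$, so $F>_\pi F'$. I build a sequentialization $\sigma$ of $\J_{F\to F'}(R)$ with $\erasej{\sigma}=\pi$ by inserting three rules into $\pi$ at the right places. Walk down the branch of $\pi$ from the creation of $F$: at the point where $F$ is introduced (as a conclusion of some rule $r_F$), replace $F$ by $F\otimes\J$ by inserting a $\otimes_\J$ rule against a fresh $\ax_\J$, and carry the side $J$-formula $\J^\perp$ down the branch (it stays in the $\Theta_\J$ zone, never interacting with $\otimes/\parr$ rules on genuine formulae, which is exactly what the side conditions $F\notin J$ of the $\otimes_\J,\parr_{\J^\perp}$ rules allow). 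Since $F>_\pi F'$, the rule $r_{F'}$ using $F'$ lies below $r_F$ in the same branch, so when we reach $r_{F'}$ both $F\parr\J^\perp$ can be formed: just below $r_{F'}$ insert the $\parr_{\J^\perp}$ rule contracting $\J^\perp$ with $F'$. Relabelling the formulas downward from these insertion points turns $\pi$ into a genuine $\MLLJ$ proof $\sigma$ of the conclusion of $\J_{F\to F'}(R)$, and by construction its desequentialization is $\J_{F\to F'}(R)$ (the three inserted rules correspond to the three links of the grafted gadget, the rest of the link/rule correspondence being inherited from $\pi\in\seq(R)$). Finally $\erasej{\sigma}=\pi$ by the defining clauses of erasure on exactly those three rules. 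Hence $\pi\in\erasej{\seq(\J_{F\to F'}(R))}$.

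\medskip

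\noindent\textbf{Main obstacle and corollary.} The delicate point is the $\supseteq$ direction: one must be sure that the side $\J^\perp$ formula can be transported down the branch from $r_F$ to $r_{F'}$ without ever being forced into a $\otimes$ or $\parr$ with a non-$J$ formula — this is precisely what the separated $J$-environment presentation of $\MLLJ$ (the $\Theta_\J$ zone and the $F\notin J$ side conditions) guarantees, and it is also what makes the two insertion points compatible with a valid derivation; the hypothesis $F>_\pi F'$ is used exactly here to ensure $r_F$ and $r_{F'}$ are comparable in the branch order so that the transport route exists. Once the theorem is established, the \textbf{correctness corollary} is immediate: $\J_{F\to F'}(R)$ is correct iff $\seq(\J_{F\to F'}(R))\neq\emptyset$ iff (applying $\erasej{\_}$, which cannot produce the empty set from a nonempty one, and using the theorem) $\seq_{(F,F')}(R)\neq\emptyset$, i.e. iff $R$ admits a sequentialization in which $F$ is above $F'$.
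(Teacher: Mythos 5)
Your proof is correct and follows essentially the same route as the paper: establish the commutation theorem by double inclusion (for $\subseteq$, trace the $\J,\J^{\perp}$ axiom and the forced ordering of the $\otimes_{\J}$ and $\parr_{\J^{\perp}}$ rules through any sequentialization; for $\supseteq$, insert the three gadget rules into $\pi$ using the hypothesis $F>_{\pi}F'$), then read off the corollary from the fact that erasure and $\J_{F\to F'}(\_)$ preserve (non)emptiness of the sequentialization sets. The only slips are cosmetic: the $\parr_{\J^{\perp}}$ rule must be inserted just \emph{above} the rule $r_{F'}$ using $F'$ (while $F'$ is still present in the sequent), and the formula formed there is $F'\parr\J^{\perp}$, not $F\parr\J^{\perp}$.
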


Let us prove Theorem~\ref{thm:jump-as-axiom}.
\begin{proof}
  $\subseteq$.
  From Def.~\ref{def:erasure} it is immediate that 
    $\erasej{\seq(R)} \subseteq \seq(\erasej{R})$.
    Now, let \(\pi\) be any sequentialization of \(R_{F\to F'}\).
    \(\pi\) contains an axiom \(\vdash \J^{\bot}, \J\), to which is
  applied a \(\otimes\)-rule tensoring \(J\) with \(F\).
  So $F$ has been created in the second branch leaving this $\otimes$-rule.
  At one point below, there is a \(\parr\)-rule between \(F'\) and \(\J^{\bot}\).
  So the use of $F'\parr \J^\bot$ is below the creation of $F$ in $\pi$, 
  hence so is the use of $F'$ in $\erasure{\pi}{J}$.

  $\supseteq$.
  Let $\pi \in \seq_{(F,F')} (R)$. By hypothesis, the rule $\mathsf{r}$ introducing $F$ is above the rule $\mathsf{r'}$ 
using $F'$ in $\pi$, we can thus perform the following transformation 
adding three more rules in $\pi$ (and propagating the changes in the resulting sequent accordingly):

\[ 
    \pi = 
      \begin{prooftree}
        \hypo{\vdots}
        \infer1[$\mathsf{r}$]{\vdash \Gamma, F}
        \infer1{\vdots}
        \infer1{\vdash \Gamma', F'}
        \infer1[$\mathsf{r'}$]{\vdots}
      \end{prooftree}
    \qquad \to \qquad
      \begin{prooftree}
        \hypo{\vdots}
        \infer1[$\mathsf{r}$]{\vdash \Gamma, F}
        \infer0[$\mathsf{ax}$]{\vdash \J, \J^{\bot}}
        \infer2[\(\otimes\)]{\vdash F \otimes \J, \Gamma,\J^{\bot}}
        \infer1{\vdots}
        \infer1{\vdash \Gamma'[F\otimes\J/F], F',\J^\perp}
        \infer1[\(\parr\)]{\vdash \Gamma'[F\otimes\J/F], F'\parr\J^\perp  }
        \infer1[$\mathsf{r'}$]{\vdots}
      \end{prooftree}
      \]
      
    These 3 rules directly map to the jump gadget: the resulting proof is a sequentialization of $\J_{F\to F'}(R)$.
\end{proof}

We can now model a single jump in a proof net. This situation
is far too restricted: we now move to the modelling of multiple jumps.

\subsection{Multiple jumps}
The definition of jump given above suffers from an obvious drawback: there is no
canonical way to consider two jumps having the same source or destination. 
To solve this issue, we need the ability to 
\begin{enumerate}
\item parallelize: jumps with the same source or destination have no order,
  hence should not be represented as successive applications of \(\parr\) or
  \(\otimes\) rules;
\item contract: each jump needs two contractions: one on the source, one on the
  target.
\end{enumerate}
Following these two requirements we ask $\J$ to have the following contraction properties:
\[
  \underbrace{\J\otimes \ldots \otimes \J}_{n\geq 0} \multimap \J
\qquad
  \underbrace{\J^\perp \parr \ldots \parr \J^\perp}_{m > 0} \multimap \J^\perp
\]
Hence, we disallow axioms on \(J\)-formul\ae{} and extend \(\MLLJ\) with the
rules, for \(n\geq 0, m>0\):
\[
  \begin{prooftree}
    \infer0[$\J(m)$]{\vdash \J, \underbrace{\J^\perp, \ldots, \J^\perp}_m}
  \end{prooftree}
  \qquad \quad
  \begin{prooftree}
    \hypo{\vdash \Gamma, \overbrace{\J^\perp, \ldots, \J^\perp}^n}
    \infer1[$\J(n)$]{\vdash \Gamma, \J^\perp}
  \end{prooftree}
\]
and, in the same way, we add the following cells to \(\PSJ\):
\vspace{\beforepn}
\begin{proofnet}
  \pnformulae{
    ~~~~~~~~\pnf[jb1']{$\J^{\bot}$}~\pnf[jb2']{$\J^{\bot}$}~~\pnf[jbn']{$\J^\bot$}\\
    \pnf[j]{$\J$}~\pnf[jb1]{$\J^{\bot}$}~\pnf[jb2]{$\J^{\bot}$}~~\pnf[jbn]{$\J^\bot$}\\
  }
  \pnjaxiom[ax]{n}{j,jb1,jb2,jbn}
  \pnjcontr{n}{jb1',jb2',jbn'}{$\J^{\bot}$}
\end{proofnet}

In the $\otimes_\J$ rule, we also replace Ax$_\J$ by $\J(n)$ for any $n\geq0$.

Note that having a weakening on $\J$ ({\it i.e.} the rule $\J(0)$) is not required when 
dealing with cut-free proofs but we simply anticipate on the next section. 

We generalise the encoding of a single jump to a list of jumps by making space 
to connect any edge with another one and then effectively realising the connections
described in the list.

\begin{definition}
  \label{def:multiple}
Given a proof-structure $R$, we define the \emph{proof structure with synchronisation  
points} $\J(R)$ as the proof structure $R$ in which every edge $F$ of $R$ 
is replaced by the following gadget $\J(F)$
(relabelling its successors to make all the 
types coherent):
\vspace{\beforepn}
\begin{proofnet}
  \pnformulae{
    \pnf[f]{$F$}~~\pnf[j]{$\J$}~~~\pnf[jb]{$\J^\bot$}\\
  }
  \pnjaxiom[ax]{1}{j,jb}
  \pntensor[tensor]{f,j}{$\textcolor{orange}{F \otimes \J}$}[1][-0.6cm]
  \pnjcontr{1}{jb}[cjb]{$\J^\bot$}
  \pnpar{tensor,cjb}{$\textcolor{orange}{(F \otimes \J)\parr \J^{\bot}}$}[1][-1.3cm]
\end{proofnet}


The $\otimes$-link bounded to the \(\J(1)\) axiom is the
\emph{out-synchronisation point} of \(F\), while the $\parr$-link bounded
to the \(\J^{\bot}(1)\) contraction is its
\emph{in-synchronisation point}.

Given two formula occurrences $F$ and $F'$ in $R$, a \emph{jump from $F$ to $F'$} is
then encoded in $\J(R)$ by connecting the out-synchronisation point of $F$ to the
in-synchronisation point of $F'$, that is, by adding a $\J^\perp$ edge from
the $\J$ axiom link of $\J(F)$ to the $\J^\perp$ contraction link of $\J(F')$.

Finally, given $\L$ a set of pairs of formula occurrences in $R$, 
we define $\J_{\L}(R)$ the proof structure corresponding to $\J(R)$
with jumps between each pair of formula occurrences provided by $\L$.
\end{definition}

\begin{figure}[b] 
  \caption{Multiple jumps}
  \label{fig:multiple}
  \vspace{\beforepn} \vspace{4pt}
  \begin{proofnet}
    \pnformulae{
      \pnf[f]{$F$}~~\pnf[j]{$\J$}~\pnf[jb]{$\J^\bot$}~\pnf{$\cdots$}~~
      \pnf[jbmid]{$\J^\bot$}~~\pnf{$\cdots$}~
      \pnf[jb']{$\J^\bot$}~\pnf[j']{$\J$}~~\pnf[f']{$F'$}\\
    }
    \pnjaxiom[ax]{n+1}{j,jb,jbmid}
    \pntensor[tensor]{f,j}{$\textcolor{orange}{F \otimes \J}$}[1][-0.6cm]
    \pnjcontr{1}{jb}[cjb]{$\J^\bot$}
    \pnpar{tensor,cjb}{$\textcolor{orange}{(F \otimes \J)\parr \J^{\bot}}$}[1][-1.1cm]
    \pnjaxiom[ax]{1}{j',jb'}
    \pntensor{f',j'}[tensor']{$\textcolor{orange}{F' \otimes \J}$}[1][0.6cm]
    \pnjcontr{m+1}{jbmid,jb'}[cjb']{$\J^\bot$}
    \pnpar{tensor',cjb'}{$\textcolor{orange}{(F' \otimes \J)\parr \J^{\bot}}$}[1][1.5cm]
  \end{proofnet}
\end{figure}

Figure~\ref{fig:multiple} illustrates the encoding of a jump from $F$ 
to $F'$ where $F$ also has $n$ outgoing jumps and $F'$ also has $m$ 
incoming jumps. In the sequel we will sometimes hide these jumps-as-axioms
encoding and schematize them with dashed arrow, which in the case
of Figure~\ref{fig:multiple} would correspond to:
\begin{center}
  \begin{tikzpicture}
    \node (f) {$F$};
    \node[right=of f] (f') {$F'$};
    \node[right=of f'] (others2) {\textellipsis};
    \node[left=of f] (others) {\textellipsis};
    \draw[->,dashed] (f.north east) to[out=90,in=90] (f'.north west);
    \draw[->,dashed] (others2.north) to[out=90,in=90] (f'.north east);
    \draw[->,dashed] (f.north west) to[out=90,in=90] (others.north);
  \end{tikzpicture}
\end{center}
\ctodo{Aurore: il y aurait moyen de rajouter les traits correspondant
au edge de F et F' dans le réseau ?}

\begin{remark}
  It is striking to note how close our encoding of jumps can be from 
  double negation translations used to add sequentializations in the translation
  of classical logic into intuitionistic logic. Indeed, in linear logic, 
 double negation translation correspond to $(A\multimap \bot) \multimap \bot$
  that is $(A\otimes \One) \parr \bot$. This can be understood as
  the production of a token ($\One$) once $A$ is created, and the wait
  for this token to return before $A$ can be consumed. In the above encoding
  of jumps, the $\J$ atom replaced $\One$, which allows many tokens to be 
  produced and given to the environment after the creation of $A$ 
  (these tokens furthermore carry the information of coming from $A$
  thanks to the axiom link), while on the other side $A$ can now wait 
  for many specific tokens (coming from other formulae) to be put in the environment
  before it can actually be used.
\end{remark}
\ctodo{aurore : je ne sais pas si cette remarque est très claire..
  Luc : Est-ce que ça dit des choses intéressantes sur la goi des réseaux avec preuves ?}

The definition of $\erasej(\_)$ (see~\ref{def:erasure}) does not change and
following the same reasoning as for Theorem~\ref{thm:jump-as-axiom} we have:
\begin{theorem}
  \label{thm:jumps-as-axiom}
  Let \(R\) be a cut-free proof-structure, and $\L$ be a set
  of pairs of formula occurrences,  then
  \(\erasej{\seq\ (\J_{\L}(R))} =  \seq_{\L}\ (R)\).
  In other words,
  \ctodo{Aurore commencer par le diagramme et dire that is}
  \begin{center}
    \begin{tikzcd}
      \PSJ \ar[r,"\seq"] & \mathcal{P}(\MLLJ) \ar[d,"\mathcal{P}(\erasej{\_})"]\\
      \PS  \ar[u,"\J_{\L}(\_)"] \ar[r,"\seq_{\L}"]& \mathcal{P}(\MLL)
    \end{tikzcd}
  \end{center}
\end{theorem}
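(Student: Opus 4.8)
The plan is to follow the same strategy as in the proof of Theorem~\ref{thm:jump-as-axiom}, treating each jump in $\L$ independently, but being careful about the new ingredients: the $\J(n)$ generalized axioms and the $\J^\perp(n)$ contraction links replace the plain axiom link, so the local transformation on sequent proofs must now interact with these multi-ary rules rather than with a binary axiom. First I would unfold the $\subseteq$ direction. Let $\pi \in \seq(\J_\L(R))$. As before, $\erasej{\pi} \in \seq(\erasej{R}) = \seq(R)$ by the remark following Definition~\ref{def:erasure} (together with the fact that $\erasej{\J(R)} = R$, since each gadget $\J(F)$ erases back to the bare edge $F$). It then remains to check that for each $(F,F') \in \L$ we have $F >_{\erasej{\pi}} F'$. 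In $\pi$, the $\J$-conclusion of the $\J(n{+}1)$ axiom link sitting at the out-synchronisation point of $F$ is tensored with $F$ by the $\otimes_\J$ rule, so $F$ is created strictly above that $\otimes_\J$ rule; the $\J^\perp$ edge going to the in-synchronisation point of $F'$ is one of the premises of the $\J^\perp(m{+}1)$ contraction feeding the $\parr$ rule on $(F' \otimes \J)\parr\J^\perp$, which itself lies below the $\otimes_\J$ rule introducing $F' \otimes \J$, hence below the creation of $F'$. Chasing this chain of "below" relations through the tree and erasing the $\J$-material, one gets that the rule using $F'$ is strictly below the rule introducing $F$, i.e. $F >_{\erasej{\pi}} F'$.

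For the $\supseteq$ direction, let $\pi \in \seq_\L(R)$. I would build a sequentialization of $\J_\L(R)$ by inserting, for every edge $F$ of $R$, the three extra rules of its gadget $\J(F)$ --- a $\J(n{+}1)$ axiom (where $n$ is the out-degree of $F$ in $\L$), a $\otimes_\J$ to form $F \otimes \J$, and a $\J^\perp(m{+}1)$ contraction followed by a $\parr$ to form $(F\otimes\J)\parr\J^\perp$ (where $m$ is the in-degree of $F$ in $\L$) --- placed respectively just after the rule creating $F$ and just before the rule using $F$, exactly as in the single-jump case, and then routing the $\J^\perp$ premises of each contraction to the appropriate $\J(n{+}1)$ axioms according to $\L$. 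The key point to verify is that this routing is compatible with the tree order of $\pi$: an edge $\J^\perp$ must travel from a $\J(n{+}1)$ axiom (placed above the rule creating $F$) down to a $\J^\perp(m{+}1)$ contraction (placed above the rule using $F'$), and this is a well-formed sequent proof precisely because $(F,F')\in\L$ forces the creation of $F$ to sit above the use of $F'$ in $\pi$; the $\J^\perp$ occurrence then simply sits in the context $\Gamma_\J$ of all rules along that segment, which is legal since every $\MLLJ$ rule carries an arbitrary $J$-environment. One checks that the resulting proof desequentializes to $\J_\L(R)$ and that its erasure is $\pi$.

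The main obstacle, and the only place where more than the single-jump argument is needed, is the parallelization/contraction bookkeeping: when $F$ is the source (resp. target) of several jumps, all the corresponding $\J^\perp$ edges must emanate from the single $\J(n{+}1)$ axiom of $\J(F)$ (resp. be collected by the single $\J^\perp(m{+}1)$ contraction of $\J(F')$), and one must argue that the order in which these $\J^\perp$ premises are listed is irrelevant --- which is exactly what the $n$-ary, unordered shape of the $\J(n)$ and $\J^\perp(n)$ rules is designed to guarantee. Concretely I would phrase the insertion as a single simultaneous transformation of $\pi$ rather than an iterated one, so that no spurious ordering between distinct jumps with a common endpoint is ever introduced; the commutation of the newly inserted $\parr_{\J^\perp}$ rules past each other (as already noted in the single-jump discussion) shows that the choice made in listing the contraction premises does not affect the desequentialization. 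Once this is in place, the two inclusions give $\erasej{\seq(\J_\L(R))} = \seq_\L(R)$, and the commuting square is just a restatement of this equality.
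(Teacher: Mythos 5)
Your proposal is correct and follows exactly the route the paper intends: the paper gives no separate argument for this theorem, stating only that it follows ``by the same reasoning as for Theorem~\ref{thm:jump-as-axiom}'', and your two inclusions are precisely that reasoning adapted to the $\J(n)$ axioms and $\J^\perp(m)$ contractions. Your explicit treatment of the routing of $\J^\perp$ edges and of the simultaneous (rather than iterated) insertion to avoid spurious orderings between jumps sharing an endpoint supplies detail the paper leaves implicit, and it is sound.
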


\begin{corollary}[correctness]
  Let \(R\) be a proof-structure, and $\L$ be a set of pairs of formula
  occurrences,  then
  \(\J_{\L}(R)\) is correct if and only if there exists
  a sequentialization of \(R\) such that \(F\) is above \(F'\) for every $(F,F')\in\L$.
\end{corollary}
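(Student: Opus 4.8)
The plan is to prove Theorem~\ref{thm:jumps-as-axiom} by reducing it to the single-jump case (Theorem~\ref{thm:jump-as-axiom}) combined with the combinatorics of the new $\J(n)$ and $\J(m)$ contraction rules. The two inclusions are established separately, following exactly the template of the single-jump proof, but with extra care devoted to how a single out-synchronisation point of $F$ (one axiom link $\J(n+1)$) distributes $\J^\perp$-edges to several in-synchronisation points, and dually how one in-synchronisation point of $F'$ (one contraction $\J(m+1)$) collects $\J^\perp$-edges from several sources.

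For the inclusion $\erasej{\seq(\J_\L(R))} \subseteq \seq_\L(R)$: first I would observe, as in the single-jump proof, that from Definition~\ref{def:erasure} we get $\erasej{\seq(\J_\L(R))} \subseteq \seq(\erasej{R})$, and that $\erasej{\J_\L(R)} = R$ by construction, so it remains to check the ordering constraints. Fix $\pi \in \seq(\J_\L(R))$ and a pair $(F,F') \in \L$. In $\pi$ there is a $\J(n+1)$ link whose principal $\J$-conclusion is tensored (via the $\otimes_\J$ rule of the out-synchronisation point) with the occurrence $F$; hence all the $\J^\perp$-conclusions of that $\J(n+1)$ link, in particular the one feeding the in-synchronisation point of $F'$, are created strictly above the creation of $F \otimes \J$, which is where $F$ is used/created. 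Dually, the $\J(m+1)$ contraction at the in-synchronisation point of $F'$ is used just below the $\parr_{\J^\perp}$ rule forming $(F' \otimes \J) \parr \J^\perp$, hence below the use of $F'$. Chaining these, the specific $\J^\perp$-edge realising the jump $(F,F')$ witnesses that the rule creating $F$ occurs above the rule using $F'$ in $\pi$, which after erasure gives $F >_\pi F'$ for the erased proof. Since this holds for every $(F,F')\in\L$, the erased proof lies in $\seq_\L(R)$.

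For the converse inclusion $\seq_\L(R) \subseteq \erasej{\seq(\J_\L(R))}$: take $\pi \in \seq_\L(R)$. For every formula occurrence $G$ of $R$, at the point where the rule creating $G$ is applied, insert the out-synchronisation gadget — a $\J(1+k_G)$ rule (where $k_G$ is the number of outgoing jumps from $G$ in $\L$) tensored against $G$ — and, at the point where $G$ is used, insert the in-synchronisation gadget — a $\J(1+\ell_G)$ contraction (where $\ell_G$ is the number of incoming jumps to $G$) followed by a $\parr_{\J^\perp}$. The key point, and the main obstacle, is routing: each of the $k_G$ extra $\J^\perp$-conclusions produced at the creation of $G$ must be carried down the proof as a side formula and eventually consumed by the contraction at the in-synchronisation point of the corresponding target $G'$; this is possible precisely because $\pi \in \seq_\L(R)$, i.e. the creation of $G$ is above the use of each such $G'$, so the $\J^\perp$-edge can travel along the (single) branch joining them. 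One must verify that all these insertions can be performed simultaneously without conflict — that the partial order on rules induced by $\L$ together with the order already present in $\pi$ is consistent (it is, since $\pi$ already realises $\L$) — and that the resulting object is a well-formed $\MLLJ$ sequent proof whose desequentialization is exactly $\J_\L(R)$ and whose erasure is $\pi$.

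The main obstacle is thus the simultaneous threading of many $\J^\perp$ side-formulas through a single sequent proof while respecting the sequent-calculus discipline (each $\J^\perp$ must stay in the context of every rule between its creation and its consumption, and the $\otimes_\J/\parr_{\J^\perp}$ side conditions $F \notin J$ must be met). I would handle this by an induction on $\pi$ that carries, as an invariant, the multiset of "pending" $\J^\perp$-edges currently in flight together with the data of which target formula each is destined for; at each rule of $\pi$ one threads the pending edges through, opens new ones when a jump source is created, and closes them (via the appropriate $\J(m)$ contraction and $\parr_{\J^\perp}$) exactly when the corresponding target is used. The base case (axiom) and the commutative cases are routine; the only delicate check is that when the target $G'$ is used, all $\ell_{G'}$ incoming edges have indeed already been created, which is guaranteed by the hypothesis $\pi \in \seq_\L(R)$. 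Finally the corollary is immediate: $\J_\L(R)$ is correct iff $\seq(\J_\L(R)) \neq \emptyset$ iff (by the theorem, since erasure of a non-empty set is non-empty and conversely any element of $\seq_\L(R)$ lifts) $\seq_\L(R) \neq \emptyset$, i.e. iff $R$ admits a sequentialization in which $F$ is above $F'$ for every $(F,F') \in \L$.
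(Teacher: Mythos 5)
Your proposal is correct and takes essentially the same route as the paper: the corollary is derived immediately from the multiple-jump theorem (non-emptiness of $\seq(\J_\L(R))$ is equivalent to non-emptiness of $\seq_\L(R)$ via the erasure map), and that theorem is itself obtained by replaying the two inclusions of the single-jump proof with the $\J(n)$ axiom and $\J^\perp(m)$ contraction links in place of the plain axiom. The paper leaves this generalization implicit ("following the same reasoning"), and your treatment of the threading of the pending $\J^\perp$ side-formulas is exactly the detail that reasoning relies on.
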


\begin{remark}
  In~\cite{DiGiamberardinoFaggian:journal}, jumps are defined as edges from one link to one of the \emph{entry port} of another link: a parr link having one entry port
  to which each of its premises are plugged;
  a tensor link having two entry ports, one for each of its premises. 
  The correctness criterion for $\MLL$ proof structures with such graphical jumps is then 
  a recast of the usual Danos-Regnier criterion: a proof structure is correct if
  and only if every of its \emph{switching paths} is connected and acyclic, a switching path
  being here defined as a path which does not use any two edges
  entering trough the same port of a $\parr$ link\ctodo{Alex: il n'y a pas un parr manquant ici?}.

  In the Danos-Regnier criterion for $\MLL$ proof structures, switching paths 
  are defined as paths which never follow both premises of a $\parr$ link.
  Viewing our in-synchronisation point as their entry port, the usual Danos-Regnier
  criterion on our encoding of jumps correspond \emph{exactly} to their recast of the 
  same criterion.
\end{remark}

\begin{remark}
  In~\cite{hughes18-unets} Hughes introduces a $\MLL$-encoding of his 
  unification nets in order to prove their correction. Although we were not 
  aware of it at the time we wrote this paper it must be acknowledged that this
  encoding is morally the same as our gadget from Definition~\ref{def:gadget1}
  Being only interested in an encoding that preserves splitting tensors and
  switching cycles (\textit{i.e.} correction), Hughes does not develop on the
  sequentialisation aspect of his encoding: in particular he shows no connection
  between his encoding and the set of sequentialisations of its nets, multiple
  cuts are not treated canonically and cuts are treated as tensors, evacuating
  any questions about their dynamics.
\end{remark}


\section{Cuts}
\label{sec:cuts}

We now investigate the internalisation of jumps in the presence of cuts. In
particular, we need to define the dynamics of $\J$ and \(\J^{\bot}\).

As intuited by~\cite{DiGiamberardinoFaggian:conf}, we expect jumps to propagate
through cut elimination: suppose indeed that the left-hand side of a cut (on a
formula \(F\)) has jumps coming into it, while the right-hand side
(\(F^{\bot}\)) has jumps coming out of it. It forces, \emph{in any
  sequentialization}, the sub-formul\ae{} of the cut-formula to be above what is
below the cut. 
This information has to be kept in some way during cut-elimination, even when
the considered cut vanishes. Moreover, if \(n\) jumps enter into \(F\) and \(m\)
jumps go out of \(F^{\bot}\), the reduction of the cut will produce
\(m\times n\) jumps: one for each composition of jumps. This reveals the
exponential nature of jumps: each jump arriving onto one of the two cut formulae
must be propagated towards every formula occurrences reached by a jump leaving
one of the two cut formulae, which cannot be realized by the multiplicative
combinatorics. Unsurprisingly we define a commutation rule between
\(\J^{\bot}(m)\) and \(\J(n)\) as: \ctodo{Aurore :commutation ou reduction ?}
\vspace{\beforepn}
\begin{center}
  \label{eq:ctr}
  \pnet{
    \pnformulae{
      \pnf[jb1']{$\J^{\bot}$}~\pnf[jb2']{$\J^{\bot}$}~~\pnf[jbn']{$\J^\bot$}\\
      ~~~~~~\pnf[j]{$\J$}~\pnf[jb1]{$\J^{\bot}$}~\pnf[jb2]{$\J^{\bot}$}~~\pnf[jbn]{$\J^\bot$}\\
    }
    \pnjaxiom[ax]{n}{j,jb1,jb2,jbn}
    \pnjcontr{m}{jb1',jb2',jbn'}[contr]{$\J^{\bot}$}
    \pncut{j,contr}
  }
  \(\longrightarrow\)
  \pnet{
    \pnformulae{
      \pnf[jb1']{$\J^{\bot}$}~~~~~~~~~\pnf[jb2']{$\J^{\bot}$}\\
      ~~~\pnf[j_1]{$\J$}~\pnf[jb1_1]{$\J^{\bot}$}~\pnf[jb2_1]{$\J^{\bot}$}~~\pnf[jbn_1]{$\J^\bot$}~~~~\pnf[j_2]{$\J$}~\pnf[jb1_2]{$\J^{\bot}$}~\pnf[jb2_2]{$\J^{\bot}$}~~\pnf[jbn_2]{$\J^\bot$}\\
      \\
~~~~~~~~~~~    \pnf{$\dots$}
    }
    \pnjaxiom[ax]{n}{j_1,jb1_1,jb2_1,jbn_1}
    \pnjaxiom[ax2]{n}{j_2,jb1_2,jb2_2,jbn_2}
    \pncut{j_1,jb1'}
    \pncut{j_2,jb2'}
    \pnjcontr{m}{jb1_1,jb1_2}[contr]{$\J^{\bot}$}[1.6][-1cm]
    \pnjcontr{m}{jb2_1,jb2_2}[contr]{$\J^{\bot}$}[1.6]
    \pnjcontr{m}{jbn_1,jbn_2}[contr]{$\J^{\bot}$}[1.6][1cm]
  }
\end{center}

which corresponds to the following map from the monoidal structure of any
monoidal category interpreting \(\MELL\) proof-structures ($\gamma$ stands for
the commutation):
\[\ctr(m)^\perp \circ \gamma \circ \ax(n) = 
  \bigotimes_n \J 
  {\multimap}
  \bigotimes_n \bigotimes_m \J 
  {\multimap}
  \bigotimes_m \bigotimes_n \J 
  {\multimap}
  \bigotimes_m  \J 
\]

To propagate newly created jumps from the above reduction rule, we also need
$\J(n)$ and $\J^{\bot}(m)$ to both act as a monoid: the corresponding rules are  given in Figure~\ref{fig:propag-jumps} (\(m\geq 0, n >0\)).
\begin{figure}
  \begin{equation}
    \label{red:axn}
\vspace{-1cm}
  \pnet{
    \pnformulae{
      \pnf[j_1]{$\J$}~\pnf[jb1_1]{$\J^{\bot}$}~\pnf[jb2_1]{$\J^{\bot}$}~~\pnf[jbn_1]{$\J^\bot$}~~~~\pnf[j_2]{$\J$}~\pnf[jb1_2]{$\J^{\bot}$}~\pnf[jb2_2]{$\J^{\bot}$}~~\pnf[jbn_2]{$\J^\bot$}\\
    }
    \pnjaxiom[ax]{n}{j_1,jb1_1,jb2_1,jbn_1}
    \pnjaxiom[ax2]{m}{j_2,jb1_2,jb2_2,jbn_2}
    \pncut{jbn_1,j_2}
  }
\quad  \longrightarrow\quad
  \pnet{
    \pnformulae{
      \pnf[j_1]{$\J$}~\pnf[jb1_1]{$\J^{\bot}$}~\pnf[jb2_1]{$\J^{\bot}$}~\pnf[jbn_1]{$\J^\bot$}~\pnf[jb1_2]{$\J^{\bot}$}~\pnf[jb2_2]{$\J^{\bot}$}~~\pnf[jbn_2]{$\J^\bot$}\\
    }
    \pnjaxiom[ax]{n+m-1}{j_1,jb1_1,jb2_1,jbn_1,jb1_2,jb2_2,jbn_2}
  }
\end{equation}
\begin{equation}
   \label{eq:ctr}
  \pnet{
    \pnformulae{
      \pnf[jb1']{$\J^{\bot}$}~\pnf[jb2']{$\J^{\bot}$}~~\pnf[jbn']{$\J^\bot$}\\
      ~~~~\pnf[jb1]{$\J^{\bot}$}~\pnf[jb2]{$\J^{\bot}$}~~\pnf[jbn]{$\J^\bot$}\\
    }
    \pnjcontr{m}{jb1',jb2',jbn'}[cjb']{$\J^{\bot}$}
    \pnjcontr{n}{cjb',jb1,jb2,jbn}[cjb]{$\J^{\bot}$}
  }
\quad  \longrightarrow\quad
  \pnet{
    \pnformulae{
      \pnf[jb1']{$\J^{\bot}$}~\pnf[jb2']{$\J^{\bot}$}~~\pnf[jbn']{$\J^\bot$}
      ~~~~\pnf[jb1]{$\J^{\bot}$}~\pnf[jb2]{$\J^{\bot}$}~~\pnf[jbn]{$\J^\bot$}\\
    }
    \pnjcontr{n+m-1}{jb1',jb2',jbn',jb1,jb2,jbn}[cjb']{$\J^{\bot}$}
  }
\end{equation}
\caption{Reductions rules for propagating jumps\label{fig:propag-jumps}}
\end{figure}

%
\paragraph{Bisimulation}
On top of propagating the jumps arriving on the premises of two links involved in
a cut, to the target of the jumps leaving the conclusions of these exact same links,
we also need to propagate jumps with respect to the sub-formula ordering, and to
destroy the ones that arrive onto the premises of a cut being eliminated.

With this intuition in mind, we extend the definition of $\J_\L(\_)$ 
presented in Section~\ref{sec:jumps-as-axioms} in order to 
preserve the duality of (cut) formulae and ensure the 
propagation of jumps throughout cut elimination. We add propagation 
and absorption points to the previous synchronisation points: 

\begin{definition}
  Given a proof-structure $R$, we define the \emph{proof structure with
    synchronisation points} $\J(R)$ as the proof structure $R$ in which the
    outgoing edges of axiom, tensor (\(\otimes\)) and par (\(\parr\)) links are
  replaced (from top to bottom) by the gadgets depicted in
  Figures~\ref{fig:gadget-axiom}, \ref{fig:gadget-pos} and \ref{fig:gadget-neg}
  respectively (making all the types coherent, \(\J(A)\) is the type of gadget
  interpreting \(A\)).
  \begin{figure}[t]
    \centering
    \vspace{\beforepn}
    \begin{proofnet}
      \pnformulae{
        \pnf[a]{$A$}~~
        \pnf[j]{$\J$}
        ~~\pnf[jb2]{$\J^{\bot}$}
        ~~\pnf[jb4]{$\J^{\bot}$}  ~~\pnf[jb3]{$\J^{\bot}$} 
        ~~\pnf[ab]{$A^{\bot}$}\\
        \\
        \\
        \\
        ~~~~~~~~~~~~~~\pnf[jz]{$\J$}\\
        \\
        \pnf[jz']{$\J$}
      }
      \pnjaxiom{3}{j,jb2,jb4,jb3}
      \pnaxiom{a,ab}[1.5]
      \pntensor{j,a}[joa]{$\textcolor{orange}{\J\otimes A}$}[2.2]
      \pnjcontr{1}{jb2}[cjb2]{$\J^{\bot}$}
      \pnjcontr{1}{jb3}[cjb3]{$\J^{\bot}$}
      \pnjcontr{1}{jb4}[cjb4]{$\J^{\bot}$}
      \pnpar{cjb3,ab}[jbpab]
      {$\textcolor{green}{\J^{\bot}\parr A^{\bot}}$}
      \pnpar{cjb2,joa}[jbpjbpjoa]
      {$\textcolor{orange}{(\J\otimes A)\parr \J^{\bot}}$}
      \pnjaxiom{0}{jz}
      \pntensor{jbpab,jz}[jbpabojoj]
      {$\textcolor{red}{(\J^{\bot}\parr A^{\bot})\otimes \J}$}
      \pnpar{jbpabojoj,cjb4}
      {$\textcolor{orange}{((\J^{\bot}\parr A^{\bot})\otimes \J)\parr \J^{\bot}}$}
      \pnjaxiom{0}{jz'}
      \pntensor{jbpjbpjoa,jz'}
      {$\textcolor{red}{((\J\otimes A)\parr \J^{\bot})\otimes\J}$}
    \end{proofnet}
 
    \caption{Gadget for the axiom $A$ and $A^{\bot}$}
    \label{fig:gadget-axiom}
  \end{figure}
  \begin{figure}[t]
    \centering
    \vspace{\beforepn}
    \begin{proofnet}
      \pnformulae{
        ~~~~~~~~~~\pnf[a]{$A$}~\pnf[b]{$B$}\\
        ~~~~~~~~~\pnf[j1]{$\J$}~\pnf[ja]{$\vdots$}~\pnf[jb]{$\vdots$}~\pnf[j2]{$\J$}\\
        \\
        \\
        \\
        \pnf[inc]{\textcolor{blue}{$\bullet$}}~\pnf[jb1]{$\J^{\bot}$}~\pnf[jb2]{$\J^{\bot}$}~\pnf[jb3]{$\J^{\bot}$}~\pnf[j3]{$\J$}\\
        \\
        \\
        ~~~~~~~~~~~~~\pnf[j4]{$\J$}~~~\pnf[jb4]{$\J^{\bot}$}~~\pnf[inc2]{$\textcolor{orange}{\bullet}$}\\
        ~~~~~~~~~~~~~~~~~~~~~\pnf[inc3]{$\textcolor{orange}{\bullet}$}\\
        \\
        ~~\pnf[j5]{$\J$}
      }
      \draw[-,dotted] (a) to (ja);
      \draw[-,dotted] (b) to (jb);
      \pntensor{j1,ja}[joa]{$\vdots$}
      \pntensor{j2,jb}[job]{$\vdots$}
      \pntensor{joa,job}[joaojob]{$\J(A)\otimes \J(B)$}
      \pntensor{joaojob,j3}[joaojoboj]
      {$\textcolor{green}{\J(A)\otimes \J(B)\otimes \J}$}
      \pnjaxiom{1}{jb3,j3}
      \pnjaxiom{1}{jb2,j1}
      \pnjaxiom{1}{jb1,j2}[2]
      \pnjcontr{3}{inc,jb1,jb2,jb3}[cjb]{$\J^{\bot}$}
      \pnpar{cjb,joaojoboj}[jbpjoaojoboj]
      {$\textcolor{blue}{\J^{\bot}\parr (\J(A)\otimes \J(B) \otimes \J)}$}
      \pnjaxiom{1}{jb4,j4,inc2}
      \pntensor{jbpjoaojoboj,j4}[jbpjoaojobojoj]
      {$\textcolor{orange}{(\J^{\bot}\parr (\J(A)\otimes \J(B)\otimes \J))\otimes \J}$}
      \pnjcontr{1}{jb4,inc3}[cjb4]{$\J^{\bot}$}
      \pnpar{jbpjoaojobojoj,cjb4}[jbpjoaojobojojpjb]
      {$\textcolor{orange}{((\J^{\bot}\parr (\J(A)\otimes\J(B)\otimes \J))\otimes \J)\parr \J^{\bot}}$}
      \pntensor{jbpjoaojobojojpjb,j5}
      {$\textcolor{red}{(((\J^{\bot}\parr (\J(A)\otimes \J(B))\otimes \J)\otimes \J)\parr \J^{\bot})\otimes\J}$}
      \pnjaxiom{0}{j5}
    \end{proofnet}
    \caption{Gadget for a positive binary connective \(A\otimes B\)}
    \vspace{10pt}
    \label{fig:gadget-pos}
  \end{figure}
  \begin{figure}[t]
    \centering
    \vspace{\beforepn}
    \begin{proofnet}
      \pnformulae{
        ~~~~~~\pnf[a]{$A$}~\pnf[b]{$B$}\\
        ~~~~~\pnf[j1]{$\J$}~\pnf[ja]{$\vdots$}~\pnf[jb]{$\vdots$}~\pnf[j2]{$\J$}\\
        \\
        \pnf[inc]{\textcolor{blue}{$\bullet$}}~\pnf[jb1]{$\J^{\bot}$}~\pnf[jb2]{$\J^{\bot}$}\\
        \\
        \\
        \\
        ~~~~~~~~~\pnf[j4]{$\J$}~~~\pnf[jb4]{$\J^{\bot}$}~~\pnf[jb4']{$\J^{\bot}$}~~\pnf[inc2]{$\textcolor{orange}{\bullet}$}\\
        ~~~~~~~~~~~~~~~~~~~~~\pnf[inc3]{$\textcolor{orange}{\bullet}$}\\
        \\
        \pnf[j5]{$\J$}
      }
      \draw[-,dotted] (a) to (ja);
      \draw[-,dotted] (b) to (jb);
      \pntensor{j1,ja}[joa]{$\vdots$}
      \pntensor{j2,jb}[job]{$\vdots$}
      \pnpar{joa,job}[joaojob]{$\J(A)\parr\J(B)$}
      \pnjaxiom{1}{jb2,j1}
      \pnjaxiom{1}{jb1,j2}[2]
      \pnjcontr{2}{inc,jb1,jb2}[cjb]{$\J^{\bot}$}
      \pnpar{joaojob,cjb}[joaojoboj]
      {$\textcolor{blue}{\J(A)\parr\J(B)\parr \J^{\bot}}$}
      \pntensor{j4,joaojoboj}[jbpjoaojoboj]
      {$\textcolor{orange}{\J\otimes (\J(A)\parr\J(B)\parr \J^{\bot})}$}
      \pnjaxiom{2}{jb4,j4,jb4',inc2}
      \pnjcontr{1}{jb4}[cjb4]{$\J^{\bot}$}
      \pnjcontr{1}{jb4',inc3}[cjb4']{$\J^{\bot}$}
      \pnpar{jbpjoaojoboj,cjb4}[jbpjoaojobojoj]
      {$\textcolor{green}{(\J\otimes (\J(A)\parr\J(B)\parr \J^{\bot})) \parr \J^{\bot}}$}
      \pnjcontr{1}{jb4}[cjb4]{$\J^{\bot}$}
      \pntensor{jbpjoaojobojoj,j5}[jbpjoaojobojojoj]
      {$\textcolor{red}{((\J\otimes (\J(A)\parr\J(B)\parr \J^{\bot})) \parr \J^{\bot})\otimes\J}$}
      \pnjaxiom{0}{j5}
      \pnpar{jbpjoaojobojojoj,cjb4'}
      {$\textcolor{orange}{(((\J\otimes (\J(A)\parr\J(B)\parr \J^{\bot})) \parr \J^{\bot})\otimes\J)\parr \J^{\bot}}$}
    \end{proofnet}
    \caption{Gadget for a negative binary connective \(A\parr B\)}
    \label{fig:gadget-neg}
  \end{figure}  
  \begin{itemize}
    \item As before, orange links define the \emph{out- and in-synchronisation 
      points} of an edge $F$ in $R$ (for axiom links we have a unique
        out-synchronisation point placed on the positive edge, 
        it is common to the two edges in order to avoid redundancy).
  \item Blue and Green links are \emph{propagation} points, they will allow
  the incoming jumps of the premises of a link to inherit from the
  outgoing jumps of the link during cut elimination (or the outgoing jumps of the conclusion of an axiom
  link to be transfered to the positive atom after reduction).
  In particular, blue links are connected to the out-synchronisation point of the
  premises of the link being replaced.
  \item Finally, Red links are \emph{absorption} points, they will erase,
  the incoming jump of the premises of a cut after its elimination.
\end{itemize}
Given two formula occurrences $F$ and $F'$ in $R$, a \emph{jump from $F$ to $F'$} is
now encoded in $J(R)$ by connecting the out-synchronisation point of $F$ to the
in-synchronisation point of $F'$, \emph{and}, if $F'$ is the premise of a link
in $R$, to the Blue propagation point of that link. 

Let $\L$ be a set of pairs
of formula occurrences in $R$, we note $\J_{\L}(R)$ the $\MLLJ$-proof structure
corresponding to $\J(R)$ with all the jumps described in $\L$.
\end{definition}

\begin{remark}
  Although our encoding is quite heavy the resulting proof-structure has a
  number of links and edges which is linear in the number of links and edges of
  the original structure and the additional number of jumps: gadgets introduce
  at most 15 links and 22 edges for one link; and jumps introduced at most 2 new
  edges.

  As intuited in the introduction of the section and detailed below, it is during
  cut elimination that our encoding may grow the most: in order not to lose any
  ordering our encoding propagates every jumps arriving onto the premisses of a cut 
  towards edges targeted by the jumps leaving these premises after cut elimination.
  In other words, eliminating a cut in $R$ will amount to
  eliminate the corresponding gadgets in $\J(R)$ but may create $m\times n$ new
  edges where $m$ is the number of jumps arriving onto the premises of the cut and
  $n$ is the number of jumps leaving them. Hence, the number of edges in 
  $\J(R)$ after cut elimination may be an exponential of the original number of 
  jumps in the original number of cuts.

  The propagation of all jumps is sufficient in order to get
  Theorem~\ref{thm:bisim} but may generate unnecessary redundancy. The question
  of eliminating this redundancy is interesting in itself but is out of the
  scope of this paper, as this would probably involve changing the way cut
  elimination is performed and thus is left for future work.
\end{remark}

On top of reintroducing duality between positive and negative formul\ae{} with
synchronisation points, absorption and propagation points also allow to destruct or
transfer jumps during cut elimination. Indeed, one can check that:
\begin{itemize}
\item dual formul\ae{} are interpreted by dual formul\ae{};
\item one step of cut reduction
between $F$ and $F^\perp$ in $R$, written $R \to_{F,F^\perp}$,
can be simulated by several steps of cut reduction in $\J_\L(R)$:
one has to eliminate all the cuts between their corresponding gadgets and jumps.
We write $\J_\L(R) \to^*_{F,F^\perp}$ for the corresponding $\MLLJ$
proof structure.
\item during these steps of cut elimination, jumps coming to one formula are
  merged with the jumps coming out of it and its dual, on both directions. 
  Moreover premises of both formul\ae{} also inherit 
  the jump going out of these formul\ae{}, or, in the case of axiom links, the
  inheritance is put on the positive atom.
\end{itemize}

Let us make precise how $R \to_{F,F^\perp}$ and $\J_\L(R) \to^*_{F,F^\perp}$
relate. For $\L_1,\L_2$ sets of jumps (we recall that jumps are pairs of
formul\ae{}), we define:
\begin{itemize}
  \item (concatenation) $\L_1 \circ \L_2 = 
    \{(F,G)\mid \exists H,(F,H)\in \L_1, (H,G)\in \L_2 \}$
  \item (exponentiation) 
    $\L_1 * \L_2 = \pi_1(\L_1) \times \pi_2(\L_2)$, where \(\pi_i\) are projection maps.
\end{itemize}
For $\L$ a set of jumps and $F$ a formula occurrence,
we define: 
\begin{itemize}
  \item (rules as jumps)\quad 
    $\L_F = \left\{ \begin{array}{ll}
        \left\{(F_1,F),(F_2,F)\right\}& \text{ if } F =F_1 \boxempty F_2 \\
        \left\{(A,A^\perp)\right\} &\text{ if  $F\in \{A,A^\perp\}$ is an atom}
    \end{array} \right.$
  \item (in and out restrictions) \quad
    $\L_{\to \F} = \{(G,F)\in \L \mid F\in \F\}\}$,
  \qquad
 $\L_{\F\to} = \{(F,G)\in \L \mid F\in \F\}$ ;
  \item (exclusion) \qquad \ 
  $\L\backslash \F = L \backslash (\L_{\to\F} \cup \L_{\F\to})$ ;
\end{itemize}

In $\J_\L(R)$,\ the set  $I_F = \L_F \cup (\L \circ \L_F)$ corresponds to the 
jumps toward the Blue propagation point of $\J(F)$.
Cut elimination between $F$ and $F^\perp$ thus creates the following
jumps: $C_{F,F^\perp} =
(I_F \cup I_{F^\perp}) * (\L_{F\to}\cup\L_{F^\perp \to})$ (as jumps in 
  $\L_{F\to}$ are duplicated when their target is the premise of a link, 
the resulting jumps are duplicated in the same way)
and delete the one in $\L_{\to F}\cup\L_{\to F^\perp }$
($\J^\perp$ edges corresponding to $I_F \cup I_{F^\perp}$ are also
deleted but they were only replications of remaining jumps in $\L$).

Writing  $\L\!\downarrow\!\{F,F^\perp\}$ for 
$(\L\backslash \{F,F^\perp\}) \cup C_{F,F^\perp}$,
we have:
\begin{theorem}[bisimulation]
  \label{thm:bisim}
  Let \(R\) be a proof-structure,
  $\L$ be a list of pairs of formula occurrences,
  and $F, F^\perp$ be the premises of a cut in $R$,
  then
  \[\J_\L(R)\to_{F;F^\perp}^* =
  \J_{\L\downarrow\{F,F^\perp\}} (R\to_{F;F^\perp})\] 
\end{theorem}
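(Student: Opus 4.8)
The plan is to argue by a case analysis on the cut-formula $F$ --- either $F$ is an atom, or $F = F_1\otimes F_2$ with $F^\perp = F_2^\perp\parr F_1^\perp$ --- relying on the fact that both cut elimination in $R$ and the encoding $\J(\_)$ are purely \emph{local} operations on proof structures. First I would decompose $\J_\L(R)$ as a context $\mathcal{C}$ --- all gadgets of $\J(R)$ and all jumps of $\L$ \emph{not} incident to $\J(F)$ or $\J(F^\perp)$ --- surrounding $\J(F)$, $\J(F^\perp)$ and the finitely many $\J^\perp$-edges attached to their synchronisation and propagation points, namely the jumps of $\L_{F\to}\cup\L_{F^\perp\to}\cup\L_{\to F}\cup\L_{\to F^\perp}$ together with the structural edges $I_F$, $I_{F^\perp}$ toward the blue points. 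Since cut elimination in $\PSJ$ is confluent and terminating, $\J_\L(R)\to^*_{F;F^\perp}$ --- obtained by firing \emph{all} the cuts lying between $\J(F)$ and $\J(F^\perp)$ --- is a well-defined proof structure, independent of the reduction order, and it rewrites only this local fragment: $\mathcal{C}$, hence every jump of $\L\backslash\{F,F^\perp\}$, is left untouched, up to the relabelling of orange formula types that the outer $\J(\_)$ applied to $R\to_{F;F^\perp}$ performs anyway. So the theorem reduces to computing the local reduct in each case and comparing it with the gadget(s) that $\J_{\L\downarrow\{F,F^\perp\}}$ places where the cut was.

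For $F = F_1\otimes F_2$, so that $\J(F)$ and $\J(F^\perp)$ are the gadgets of Figures~\ref{fig:gadget-pos} and~\ref{fig:gadget-neg}, I would trace the reduction in four layers. (i) \emph{Multiplicative core:} the $\otimes/\parr$ reductions of Figure~\ref{fig:proof-structure-red} break the multiplicative skeleton of the two gadgets into the skeletons of $\J(F_1)$ cut against $\J(F_1^\perp)$ and of $\J(F_2)$ cut against $\J(F_2^\perp)$ --- exactly the encoding of the two cuts appearing in $R\to_{F;F^\perp}$. (ii) \emph{Propagation:} the green and blue propagation points cause the outgoing jumps of $F$ and of $F^\perp$ to be inherited by the in-synchronisation points of $F_1,F_2$ and of $F_1^\perp,F_2^\perp$; after the monoid fusions of Figure~\ref{fig:propag-jumps} these are precisely the concatenation terms $\L\circ\L_F$ occurring in the sets $I_{F_i}$, $I_{F_i^\perp}$ of the re-encoded net. (iii) \emph{Exponential blow-up:} each $\J^\perp(m)$ collecting the incoming jumps at a blue point (the $m$ edges of $I_F$, resp.\ $I_{F^\perp}$) meets a $\J(n)$ providing the outgoing jumps (the $n$ edges of $\L_{F\to}$, resp.\ $\L_{F^\perp\to}$); the commutation rule between $\J^\perp(m)$ and $\J(n)$, followed again by the monoid fusions of Figure~\ref{fig:propag-jumps}, replaces this configuration by the $m\times n$ composed $\J^\perp$-edges, which is exactly $(I_F\cup I_{F^\perp})*(\L_{F\to}\cup\L_{F^\perp\to}) = C_{F,F^\perp}$, with the duplication prescribed whenever a target is a premise of a link. (iv) \emph{Absorption:} once drained, the cells facing the red absorption points are $\J(0)$ axioms, and reducing those cuts deletes the incoming jumps $\L_{\to F}\cup\L_{\to F^\perp}$ at the in-synchronisation points of the two cut premises (the blue replicas of $I_F\cup I_{F^\perp}$ disappear as well, but these only duplicated jumps that survive in $\L$). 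Collecting (i)--(iv), the reduct is $\J(R\to_{F;F^\perp})$ equipped with $(\L\backslash\{F,F^\perp\})\cup C_{F,F^\perp} = \L\!\downarrow\!\{F,F^\perp\}$, which is the right-hand side.

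The atomic case $F\in\{A,A^\perp\}$ is similar but much lighter: $F$ is the conclusion of an axiom of $R$, the axiom gadget(s) of Figure~\ref{fig:gadget-axiom} attached to that cut are merged by the axiom reduction of Figure~\ref{fig:proof-structure-red}, and the propagation points re-route the outgoing jumps of $A$ and of $A^\perp$ onto the single surviving positive atom --- which is exactly what $\L_A = \{(A,A^\perp)\}$, $I_A = \L_A\cup(\L\circ\L_A)$ and the operation $*$ encode --- while the red points absorb $\L_{\to A}\cup\L_{\to A^\perp}$ as before; one then checks that the reduct is $\J_{\L\downarrow\{A,A^\perp\}}(R\to_{A;A^\perp})$.

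The hard part is the jump bookkeeping in layer (iii): one must verify that the order in which the $\J(n)/\J^\perp(m)$ cuts are fired does not matter (this is precisely where confluence of $\PSJ$ is used), that the $m\times n$ blow-up produced by the commutation rule is in bijection with the product of the source-projection of the incoming jumps and the target-projection of the outgoing jumps --- \emph{including} the replication caused by targets that are premises of links --- and that the surviving $I_F$-edges toward blue points are not double-counted, being mere replicas of jumps still present in $\L$, as already flagged in the remark after the definition of $\J_\L(\_)$. I expect to discharge this by maintaining, along the reduction sequence, an explicit partial jump set that is provably in bijection with the $\J^\perp$-edges present at each stage, and reading off its final value; the side conditions $m>0$, $n\geq0$, and in particular the weakening rule $\J(0)$ --- which is exactly what lets an edge carrying no outgoing jump still be cut correctly --- are what keep this invariant well-formed throughout.
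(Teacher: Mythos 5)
Your proposal is correct and takes essentially the same route as the paper: the paper gives no formal proof of the bisimulation theorem, only the informal local computation preceding its statement, which is exactly what you spell out --- firing all the $\J$/$\J^\perp$ cuts between the two gadgets, identifying the created jumps with $C_{F,F^\perp}=(I_F\cup I_{F^\perp})*(\L_{F\to}\cup\L_{F^\perp\to})$, deleting those in $\L_{\to F}\cup\L_{\to F^\perp}$, and treating the blue-point edges as replicas of surviving jumps. Your four-layer decomposition and the atomic versus binary case split are a faithful, more detailed elaboration of that sketch.
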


\begin{corollary}
  Let \(R\) be a proof-structure and
  $\L$ be a list of pairs of formula occurrences, then 
  \[\seq_{\Downarrow \L} (\Downarrow R) =
  \seq(\Downarrow (\J_\L R))\] 
  where
  $\Downarrow \L$ is defined as 
  $ \L \downarrow \rho(1) \cdots\downarrow \rho(n)$
  for $\rho$ any reduction sequence from $R$ to $\Downarrow R$.
\end{corollary}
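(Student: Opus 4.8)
The plan is to obtain the identity by \emph{iterating} the bisimulation Theorem~\ref{thm:bisim} along a reduction of $R$ to normal form, and then invoking the cut-free Theorem~\ref{thm:jumps-as-axiom} on that normal form. Fix a reduction sequence $\rho$ witnessing $\Downarrow_\rho R$, say $R = R_0 \to_{F_1;F_1^\perp} R_1 \to_{F_2;F_2^\perp} \cdots \to_{F_n;F_n^\perp} R_n = \Downarrow R$, and put $\L_0 = \L$ and $\L_i = \L_{i-1}\downarrow\{F_i,F_i^\perp\}$, so that $\L_n = \Downarrow\L$ by definition of $\Downarrow\L$.

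First I would apply Theorem~\ref{thm:bisim} to each step: for every $i$ we get $\J_{\L_{i-1}}(R_{i-1}) \to^*_{F_i;F_i^\perp} \J_{\L_i}(R_i)$. Chaining these yields $\J_\L(R) = \J_{\L_0}(R_0) \to^* \J_{\Downarrow\L}(\Downarrow R)$. Now $\Downarrow R$ is cut-free, and the encoding $\J_{(-)}(-)$ of Section~\ref{sec:cuts} only grafts gadgets (Figures~\ref{fig:gadget-axiom}--\ref{fig:gadget-neg}) built from axiom, $\parr$, $\otimes$ and $\J$/$\J^\perp$ links together with $\J^\perp$ jump-edges, but never a cut; hence $\J_{\Downarrow\L}(\Downarrow R)$ is cut-free. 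Since cut elimination on proof-structures is confluent and terminating (recalled after Figure~\ref{fig:proof-structure-red}), a cut-free reduct is \emph{the} normal form, so $\Downarrow(\J_\L R) = \J_{\Downarrow\L}(\Downarrow R)$. In passing this also gives well-definedness of $\Downarrow\L$: the left-hand side $\Downarrow(\J_\L R)$ does not depend on $\rho$, and $\L' \mapsto \J_{\L'}(\Downarrow R)$ is injective (distinct jump sets connect distinct pairs of synchronisation points), so $\Downarrow\L$ is uniquely determined.

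Next I would feed the cut-free $\Downarrow R$ and the jumps $\Downarrow\L$ into Theorem~\ref{thm:jumps-as-axiom}, obtaining $\erasej{\seq(\J_{\Downarrow\L}(\Downarrow R))} = \seq_{\Downarrow\L}(\Downarrow R)$. Substituting $\J_{\Downarrow\L}(\Downarrow R) = \Downarrow(\J_\L R)$ gives $\seq_{\Downarrow\L}(\Downarrow R) = \erasej{\seq(\Downarrow(\J_\L R))}$, which is the statement (reading, as in Theorem~\ref{thm:jumps-as-axiom}, the right-hand $\seq$ postcomposed with $\erasej{\cdot}$ so that both sides are sets of $\MLL$ proofs).

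The main obstacle is not in this bookkeeping but in what Theorem~\ref{thm:bisim} already packages and which must be used with care: that a single step $R \to_{F;F^\perp} R'$ corresponds to a \emph{bounded, confluent cluster} of cut-reductions $\J_\L(R) \to^*_{F;F^\perp} \J_{\L\downarrow\{F,F^\perp\}}(R')$ removing \emph{all} the plumbing cuts created between the gadgets $\J(F)$, $\J(F^\perp)$ and their attached $\J(n)$/$\J^\perp(m)$ links — including the $m\times n$ copies produced by the commutation rule and the absorption of the jumps in $\L_{\to F}\cup\L_{\to F^\perp}$ — and that the result is \emph{literally} $\J$ of $R'$ with the updated jump set, not merely reduction-equivalent to it. Given that on-the-nose equality, confluence of proof-structure reduction lets one commute this cluster with the rest of $\rho$ and conclude. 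A secondary check is that every cut of $\J_\L(R)$ lies inside one such cluster indexed by a cut of $R$, so that exhausting $\rho$ exhausts all cuts of $\J_\L(R)$; this is immediate from the shape of the gadgets.
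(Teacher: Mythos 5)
Your proposal is correct and follows essentially the same route as the paper: iterate Theorem~\ref{thm:bisim} along $\rho$ to get $\Downarrow_\rho(\J_\L R) = \J_{\Downarrow_\rho\L}(\Downarrow_\rho R)$, invoke confluence of proof-structure reduction to discharge the dependence on $\rho$, and conclude with the cut-free Theorem~\ref{thm:jumps-as-axiom}. Your additional observations — that the encoding of a cut-free structure is itself cut-free (so the reduct really is \emph{the} normal form) and that the statement's right-hand side should be read modulo $\erasej{\cdot}$ — are correct and make explicit details the paper leaves implicit.
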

\begin{proof}
  Theorem~\ref{thm:bisim} implies 
  $\J_{\Downarrow_\rho \L} (\Downarrow_\rho R) =
  \Downarrow_\rho (\J_\L R)$ for any reduction sequence $\rho$,
  and by confluence of cut elimination in proof-structure the reduction
  path does not matter.
  
  Back in a cut-free context, one can apply 
  Theorem~\ref{thm:jumps-as-axiom} to get the result. 
\end{proof}

\paragraph{Correction}
As done for tensor in Section~\ref{sec:jumps-as-axioms}, we restrict the use
of cuts on $J$-formulae in order to have a well-defined notion of 
erasure of $\J$ from $\MLLJ$ proofs to $\MLL$ proofs:
\[
  \begin{prooftree}
    \infer0[ax$_\J(n)$]{\vdash \J, \J^\perp, \cdots,\J^\perp}
    \hypo{\vdash \Gamma, \J^\perp,\Theta_\J}
  \infer2[cut\(_\J\)]
  {\vdash \Gamma, \J^\perp, \cdots, \J^\perp, \Theta_\J}
  \end{prooftree}
\]
$\MLLJ$ is closed under cut elimination and following the 
same reasoning as before we still have
\begin{theorem}
  Let \(R\) be a proof-structure, and $\L$ be a set of pairs of 
  formula occurrences,   then
  \[\erasej{\seq\ (\J_{\L}(R))} =  \seq_{\L}\ (R).\]
\end{theorem}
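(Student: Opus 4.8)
The plan is to mirror the proof of Theorem~\ref{thm:jump-as-axiom} (and of Theorem~\ref{thm:jumps-as-axiom}), proving the two inclusions $\erasej{\seq(\J_{\L}(R))} \subseteq \seq_{\L}(R)$ and $\seq_{\L}(R) \subseteq \erasej{\seq(\J_{\L}(R))}$ separately. The only genuinely new ingredient compared with Section~\ref{sec:jumps-as-axioms} is that $R$, and hence its sequentializations, may now contain cuts; these are handled through the restricted $\cut_\J$ rule and the corresponding extension of $\erasej{\_}$ to proofs with cuts (an ordinary $\cut$ survives with its premisses erased, a $\cut_\J$ against an $\ax_\J(n)$ is simply discarded). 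Note that Theorem~\ref{thm:bisim} only yields the statement for normal forms (its corollary), so in the presence of cuts one cannot reduce to the cut-free case and the direct argument is required.

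For $\erasej{\seq(\J_{\L}(R))} \subseteq \seq_{\L}(R)$, I would first record the obvious identity $\erasej{\J_{\L}(R)} = R$: erasing all occurrences of $\J, \J^\perp$ together with the links immediately attached to them — the $\ax_\J(n)$, the $\J(m)$-contractions, and the orange, blue, green and red $\otimes$ and $\parr$ of the gadgets of Figures~\ref{fig:gadget-axiom}, \ref{fig:gadget-pos} and \ref{fig:gadget-neg} and of the jumps — collapses every gadget $\J(A)$ back to $A$. Since $\pi \in \seq(S)$ implies $\erasej{\pi} \in \seq(\erasej{S})$, any $\pi \in \seq(\J_{\L}(R))$ erases to $\erasej{\pi} \in \seq(R)$. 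It then remains to check, for each $(F,F') \in \L$, that $F >_{\erasej{\pi}} F'$. In $\pi$ the jump from $F$ to $F'$ is witnessed by an $\ax_\J$-link one of whose $\J^\perp$-conclusions feeds, through a $\J(m)$-contraction, the in-synchronisation $\parr$ of $\J(F')$, while its $\J$-conclusion is tensored at the out-synchronisation $\otimes$ of $\J(F)$; exactly as in the single-jump argument, this axiom link forces the out-synchronisation $\otimes$ of $F$ to be applied above the in-synchronisation $\parr$ of $F'$. Since the internal wiring of a gadget contains no branching separating the creation of a formula occurrence from its out-synchronisation point, nor its in-synchronisation point from the rule using it, and since the jump axioms are internal to gadgets and so never become a premiss of a $\cut$, erasing the $\J$-material turns this into $F >_{\erasej{\pi}} F'$; hence $\erasej{\pi} \in \seq_{\L}(R)$.

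For the converse, I would start from an arbitrary $\pi \in \seq_{\L}(R)$ and build $\pi' \in \seq(\J_{\L}(R))$ with $\erasej{\pi'} = \pi$ by threading the gadget machinery into $\pi$. Concretely, following Figures~\ref{fig:gadget-axiom}, \ref{fig:gadget-pos} and \ref{fig:gadget-neg}, one replaces each rule of $\pi$ (and the $\cut$ rules) by the corresponding block of rules: at the rule creating an occurrence $F$ one introduces an $\ax_\J(n)$, with $n$ the number of jumps leaving $F$, performs the out-synchronisation $\otimes$ and the local green/red propagation and absorption rules (using $\J(0)$-axioms for the red and green points), and carries the freshly created $\J^\perp$-edges downward; at the rule using an occurrence $F'$ one collects the $\J^\perp$-edges targeting $F'$ (jumps plus sub-formula propagation), applies a $\J(m)$-contraction and performs the in-synchronisation $\parr$ together with the blue propagation point; a $\cut$ of $R$ between $F$ and $F^\perp$ becomes a $\cut$ between $\J(F)$ and $\J(F)^\perp$ placed below both gadgets, the auxiliary $\J$-cuts being discharged by $\cut_\J$. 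The hypothesis that $(F,F') \in \L$ implies $F >_\pi F'$ is exactly what makes this routing legal: the $\J^\perp$-edge of that jump is created (above) at $F$ and must be consumed (below) at $F'$. One then checks that $\pi'$ is a valid $\MLLJ$ derivation (all side conditions $F \notin J$ hold by construction), that $R_{\pi'} = \J_{\L}(R)$, and that $\erasej{\pi'} = \pi$ since the inserted rules are precisely those that erasure removes.

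The routine part is the first inclusion and the bookkeeping identity $\erasej{\J_{\L}(R)} = R$; the main obstacle is the construction in the second inclusion. One must verify that the heavy gadgets of Section~\ref{sec:cuts} — with their blue, green and red auxiliary points and the associated $\J(0)$-axioms — can always be threaded into an arbitrary sequentialization $\pi$, that the $\cut$s of $R$ and the auxiliary $\J$-cuts can consistently be scheduled below the corresponding gadget rules, and that the resulting proof is rigid enough that its only degrees of freedom relative to $\pi$ are the commutations of the freshly inserted $\parr$/$\otimes$ rules — which is exactly what $\erasej{\_}$ quotients away. Once this is granted, the argument closes as for Theorems~\ref{thm:jump-as-axiom} and~\ref{thm:jumps-as-axiom}.
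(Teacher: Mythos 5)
Your proof follows essentially the same route as the paper, which at this point simply asserts the result ``following the same reasoning as before'', i.e.\ by the two-inclusion argument already used for Theorems~\ref{thm:jump-as-axiom} and~\ref{thm:jumps-as-axiom}. Your elaboration of how the cuts, the $\cut_\J$ rule, and the heavier gadgets of Section~\ref{sec:cuts} are threaded through that argument (including the observation that one cannot simply reduce to the cut-free case via Theorem~\ref{thm:bisim}) matches the intended proof, so there is nothing to object to.
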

So as before we have an encoding ($\J_{\L}$) of graphical jumps in $\MLLJ$  (and its corresponding decoding $\mathsf{erase}_\J$) such that:
  \begin{center}
    \begin{tikzcd}
      \PSJ \ar[r,"\seq"] & \mathcal{P}(\MLLJ) \ar[d,"\mathcal{P}(\erasej{\_})"]\\
      \PS  \ar[u,"\J_{\L}(\_)"] \ar[r,"\seq_{\L}"]& \mathcal{P}(\MLL)
    \end{tikzcd}
  \end{center}
  and, furthermore, by Theorem~\ref{thm:bisim},  there exists a {bisimulation} procedure so
  that the above diagram is stable under cut-elimination.

\section{Encoding in MELL + Mix}
\label{sec:implem}

In this section, in order to finish our logical internalization of jumps, we
eventually show how jumps, as defined in Figure~\ref{sec:jumps-as-axioms}
and~\ref{sec:cuts}, can be implemented in the exponential fragment of $\MELL$ +
$\Mix$.

As explained in Section~\ref{sec:jumps-as-axioms}, we need $\J$ and $\J^\perp$ 
to be able to contract. This can be achieved using the contraction rule of the
$\wn$-exponential in $\MELL$, so our encoding of $\J$ and $\J^\perp$ must involve
this exponential modality.
$\J$ only needs the ability to contract at the level
of axioms while $\J^\perp$ is free to contract everywhere, furthermore,
$\J$ and $\J^\perp$ must be dual from each other, as such, we set:
\[\J = \oc\wn\axj \qquad \qquad \J^\perp = \wn\oc\axj^\perp\]
where $\axj$ is a special atom that does not appear in regular
$\MLL$ formulae.
\begin{figure}[t]
  \centering
  \vspace{\beforepn}
  \subcaptionbox{\(\J(0)\)\label{fig:encoding-j0}}{
    \pnet{
      \pnformulae{
        \pnf[?aj]{$\wn\axj$}
      }
      \pninitial[?]{$\wn$}{?aj}
      \pnbox{?aj,?}
      \pnprom{?aj}{$\oc\wn\axj$}
    }
  }\hfill
  \subcaptionbox{\(\J(n), n>0\)\label{fig:encoding-jn}}{
    \pnet{
      \pnformulae{
        \pnf[jb1]{$\axj$}~\pnf[j1]{$\axj^{\bot}$}~~~\pnf{\textellipsis}~~
        \pnf[jbn]{$\axj$}~\pnf[jn]{$\axj^{\bot}$}\\
      }
      \pnaxiom[ax1]{j1,jb1}
      \pnbox{jb1,j1,ax1}
      \pnprom{j1}[!j1]{$\oc\axj^{\bot}$}
      \pnauxprom{jb1}[?jb1]{$\wn\axj$}
      \pnaxiom[axn]{jn,jbn}
      \pnbox{jbn,jn,axn}
      \pnprom{jn}[!jn]{$\oc\axj^{\bot}$}
      \pnauxprom{jbn}[?jbn]{$\wn\axj$}
      \pnexp{}{?jb1,?jbn}[?jb]{$\wn\axj$}[1][-2cm]
      \pnbox{?jb,ax1,axn,jb1,jn}
      \pnprom{?jb}{$\oc\wn\axj$}
      \pnauxprom{!j1}{$\wn\oc\axj^{\bot}$}
      \pnauxprom{!jn}{$\wn\oc\axj^{\bot}$}
    }
  }\hfill
  \subcaptionbox{\(\J^{\bot}(n), n > 0\)\label{fig:encoding-jbn}}{
    \pnet{
      \pnformulae{
        \pnf[?1]{$\wn\oc\axj^{\bot}$}~~\pnf{\textellipsis}~~\pnf[?n]{$\wn\oc\axj^{\bot}$}
      }
      \pnexp{}{?1,?n}{$\wn\oc\axj^{\bot}$}
    }
  }
  \caption{Encoding of the combinators on $\J$ and $\J^{\bot}$}
  \label{fig:encoding}
\end{figure}

In Figure~\ref{fig:encoding} we show how the static properties of jumps (that are $\J(n\geq 0)$ and $\J^\perp(m>0)$) can be derived via this encoding. 
The dynamics properties of jumps given described in Section~\ref{eq:ctr}  
can then be derived from the usual dynamics of exponential in $\MELL$.

\section{Conclusion and future work}
\label{sec:conclusion}

We have presented a logical framework, $\MLLJ$, in which
$\MLL$-proof-structures with jumps can be encoded.
This framework can be seen both as an extension of $\MLL$
  with a special atom and as a fragment of $\MELL$ + $\Mix$.
  As such, correctness criteria for proof-structures in these logics induce correctness criteria 
  for $\MLLJ$-proof-structures. In particular, the Danos-Regnier correctness criterion for $\MLL$ induces
  correctness criterion for $\MLLJ$, which corresponds to the correctness criterion
  given in~\cite{DiGiamberardinoFaggian:conf}, in a setting where jumps in proof structures are treated from a purely graph-theoretical point of view.
 
  Our encoding is compatible with the dynamics of proof-nets: cut elimination in our framework
  achieves the graphical dynamics of jumps considered 
  in~\cite{DiGiamberardinoFaggian:conf,giamberardino_2018}.
  On top of being the most natural one (it makes two jumps to
  compose if one targets the premise of a link involved in a cut 
and the other leaves one of these links), we believe that this
dynamics is justified by the following conjecture:
\begin{conjecture}
  Let $R$ be a $\MLL$ proof-net, and $\L$ be a set of pairs of formula
  occurrences such that $\sec_\L(R) = \{\pi\}$, i.e. $\L$ adds enough 
  constraints on $R$ for its sequentialization set to be a singleton, then
  \[\sec(\Downarrow \J_\L(R)) = 
  \{\pi' \mid \exists \rho, \pi' = \Downarrow_\rho \pi\}\]
\end{conjecture}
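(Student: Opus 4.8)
The plan is to reduce the statement to a cut-free claim via the corollary following Theorem~\ref{thm:bisim}, and then to prove that claim by a double inclusion. Reading the left-hand side with the $\erasej{\_}$ that makes the two sides type-compatible (exactly as in that corollary), and using that net cut-elimination is confluent so that $\Downarrow\L$ does not depend on the reduction path, the corollary gives $\erasej{\seq(\Downarrow\J_\L(R))} = \seq_{\Downarrow\L}(\Downarrow R)$. Hence it suffices to prove, under the hypothesis $\seq_\L(R) = \{\pi\}$, that
\[
  \seq_{\Downarrow\L}(\Downarrow R) \;=\; \{\pi' \mid \exists\rho,\ \pi' = \Downarrow_\rho\pi\}.
\]

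For the inclusion $\supseteq$ I would argue as follows. Let $\rho$ be a reduction of $\pi$ to a cut-free proof $\pi' = \Downarrow_\rho\pi$. Desequentialization is preserved by cut-elimination, so $\pi'$ sequentializes some cut-free proof-structure, which by confluence of net cut-elimination must be $\Downarrow R$; hence $\pi'\in\seq(\Downarrow R)$. It remains to check that $\pi'$ respects $\Downarrow\L$. Splitting $\rho$ into its net-effective groups of steps and its pure cut-commutations, each net-effective group realises one net reduction $R_i \to_{F_i;F_i^\perp} R_{i+1}$; lifting $\pi$ through $\seq_\L(R) = \erasej{\seq(\J_\L(R))}$ (Theorem~\ref{thm:jumps-as-axiom}) to some $\hat\pi\in\seq(\J_\L(R))$ with $\erasej{\hat\pi}=\pi$, Theorem~\ref{thm:bisim} gives $\J_\L(R)\to^*_{F_1;F_1^\perp} = \J_{\L\downarrow\{F_1,F_1^\perp\}}(R_1)$, and the corresponding reduct $\hat\pi_1$ of $\hat\pi$ lies in its set of sequentialisations, so $\erasej{\hat\pi_1}$ respects $\L\downarrow\{F_1,F_1^\perp\}$. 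Iterating along $\rho$ yields $\pi'\in\seq_{\Downarrow\L}(\Downarrow R)$. This direction is fiddly but uses only results already established.

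The inclusion $\subseteq$ is the crux. Fix $\sigma\in\seq_{\Downarrow\L}(\Downarrow R)$; the goal is to exhibit a reduction $\rho$ of $\pi$ with $\Downarrow_\rho\pi=\sigma$. I would phrase matters in terms of the relative orderings of rules that a sequentialisation is obliged to respect. In $\MLL$ proof-nets cut-elimination never duplicates a link — the multiplicative step only erases a $\otimes$/$\parr$ pair and the axiom step only identifies two edges — so the rules of $\Downarrow R$ are canonically identified with the surviving rules of $R$, and likewise $\Downarrow\L$ is $\L$ transported onto these survivors: jumps composed through the deleted formula occurrences, together with the structural jumps $\L_F$ created at each eliminated cut. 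Thus $\sigma$ is a sequentialisation of $\Downarrow R$ subject to the ordering constraints recorded by $\Downarrow\L$. The key step would then be a characterisation theorem for cut-elimination in the $\MLL$ sequent calculus: the cut-free reducts $\Downarrow_\rho\pi$ of $\pi$ are \emph{exactly} the sequentialisations of $\Downarrow R$ compatible with $\Downarrow\L$. One half of this is the inclusion $\supseteq$ already proved; the other half — every $\Downarrow\L$-compatible sequentialisation of $\Downarrow R$ is reachable — expresses that $\MLL$ cut-elimination is maximally flexible, i.e. any relative order of surviving rules not forbidden by a genuinely forced dependency can be produced by a suitable interleaving of cut-commutations and key steps.

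The main obstacle is precisely this flexibility/characterisation statement, and in particular showing that the constraints genuinely forced on $\Downarrow_\rho\pi$ are captured exactly by $\Downarrow\L$ together with the structural dependencies of $\Downarrow R$, neither more nor less. This is where the hypothesis $\seq_\L(R)=\{\pi\}$ and the design of the propagation and absorption points of Section~\ref{sec:cuts} must be used essentially: in general, as recalled in Section~\ref{sec:mll}, $\seq(R')$ is strictly larger than the set of reducts of $\seq(R)$, and the whole purpose of saturating $\Downarrow\L$ with the orderings created during reduction is to rule out exactly the spurious sequentialisations. I would attempt the characterisation by induction on the length of a reduction $R\to^*\Downarrow R$: the base case ($R$ cut-free, so $\Downarrow R=R$, $\Downarrow\L=\L$ and $\{\Downarrow_\rho\pi\}=\{\pi\}=\seq_\L(R)$) is immediate, while the inductive step requires a local analysis of how one $\otimes$/$\parr$ or axiom cut simultaneously rearranges the dependency order of the surviving rules and the jump set, and of how the non-confluence introduced by that single step is matched exactly by the new jumps $\L\downarrow\{F,F^\perp\}$. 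I expect this local analysis to be the substantial technical content of a full proof.
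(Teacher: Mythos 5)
This statement is labelled a \emph{conjecture} in the paper: the authors only observe that the inclusion $\supseteq$ follows from Theorem~\ref{thm:bisim} and explicitly leave $\subseteq$ as future work. Your proposal reproduces exactly that state of affairs rather than improving on it. Your reduction to the cut-free claim $\seq_{\Downarrow\L}(\Downarrow R) = \{\pi' \mid \exists\rho,\ \pi'=\Downarrow_\rho\pi\}$ via the corollary of Theorem~\ref{thm:bisim} is sound (modulo the implicit $\erasej{\_}$, which you handle correctly), and your argument for $\supseteq$ --- lifting $\pi$ to $\seq(\J_\L(R))$ by Theorem~\ref{thm:jumps-as-axiom} and pushing it through the bisimulation one net-reduction at a time --- is essentially the argument the authors have in mind. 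So far, so good, but this is the half that was already claimed.

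The genuine gap is the inclusion $\subseteq$, and your proposal does not close it: you reduce it to a ``characterisation theorem'' asserting that $\MLL$ sequent-calculus cut-elimination is maximally flexible, i.e.\ that every sequentialisation of $\Downarrow R$ compatible with $\Downarrow\L$ is reachable as some $\Downarrow_\rho\pi$, and you then state that proving this is ``the substantial technical content of a full proof.'' That characterisation is not a routine lemma one can defer --- it is the entire content of the open direction. The paper itself warns (Section~\ref{sec:mll}) that in general $\seq(R')$ is strictly larger than the set of reducts of proofs in $\seq(R)$, so the claim that $\Downarrow\L$ captures \emph{exactly} (neither more nor less than) the forced dependencies is precisely what must be established, and your induction sketch does not carry out the local analysis of how one $\otimes/\parr$ or axiom step interacts with the exponentiated jump set $C_{F,F^\perp}=(I_F\cup I_{F^\perp})*(\L_{F\to}\cup\L_{F^\perp\to})$. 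In particular, it is not obvious that the cut-commutations available in the sequent calculus (Figure~\ref{fig:redMLL}) suffice to realise every ordering permitted by $\L\downarrow\{F,F^\perp\}$: the product construction $*$ may in principle under- or over-constrain relative to what the two nondeterministic branches of the $\otimes/\parr$ reduction can actually produce, and ruling this out is exactly the missing work. As it stands, your text is a correct account of why the conjecture is plausible and where its difficulty lies, but not a proof of it.
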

In other words, the freedom of sequentialization induced by the cut 
elimination procedure in Jumped proof structures corresponds exactly 
to the choices made during cut elimination in the sequent calculus.
$\supseteq$ is a consequence of Theorem~\ref{thm:bisim}, $\subseteq$
however is left as future work.

Our framework draws a clear line between formul\ae{} encoding jumps and
regular $\MLL$ formul\ae{}. This advocates for easy extensions to other
logics such as $\MALL$ or $\MELL$ with or without the $\Mix$ rule.
In particular we plan, in expanding this work to $\MELL$ to study in detail the correspondence between the dynamics that is induced by our modelling of jumps and the cut-elimination considered in~\cite{giamberardino_2018} for the exponentials as well as how the notion of cones introduced by di Giamberardino as a replacement and generalization of exponential boxes translates in our setting.


This work fits in a general tendency to try to make sense logically of the
graph-theoretic foundations of proof-nets. Here, we reinterpreted jumps as a
purely logical structure.

%
\ctodo{Can we have a short example illustrating this?}
\ctodo{Aurore: En fait le contre-exemple que j'avais en tête n'est pas valide
et je suis maintenant convaincu que ça fonctionne même avec mix vu 
les restrictions qu'on impose sur $\otimes_\J$
j'ai du coup enlevé la remarque dans l'intro aussi}

  \paragraph{Acknowledgments} The authors thank the anonymous reviewers for
  their remarks and advice, and Paolo Pistone for pointing to them the
  connection to Hughes' work.

  Aurore Alcolei was supported by the DIM RFSI.

  Alexis Saurin was partially supported by the ANR project ReCiProg.

\bibliographystyle{entics}
\bibliography{biblio}

\appendix

\end{document}